\documentclass[11pt]{article}
\usepackage{amsmath,amsthm,amssymb,cite,enumerate}
\usepackage[a4paper,left=0.5in,right=1in]{geometry}
\usepackage[usenames]{color}
\usepackage{graphicx}
\usepackage{epsfig}
\usepackage{dcolumn}
\usepackage{bm}
\usepackage{authblk} 

\setlength{\arraycolsep}{2pt} 

\begin{document}

\def \d {{\rm d}}

\def \bm {\mbox{\boldmath{$m$}}}

\def \bF {\mbox{\boldmath{$F$}}}
\def \bV {\mbox{\boldmath{$V$}}}
\def \bff {\mbox{\boldmath{$f$}}}
\def \bT {\mbox{\boldmath{$T$}}}
\def \bk {\mbox{\boldmath{$k$}}}
\def \bl {\mbox{\boldmath{$\ell$}}}
\def \bn {\mbox{\boldmath{$n$}}}
\def \bbm {\mbox{\boldmath{$m$}}}
\def \tbbm {\mbox{\boldmath{$\bar m$}}}

\def \T {\bigtriangleup}
\newcommand{\msub}[2]{m^{(#1)}_{#2}}
\newcommand{\msup}[2]{m_{(#1)}^{#2}}

\newcommand{\be}{\begin{equation}}
\newcommand{\ee}{\end{equation}}

\newcommand{\beqn}{\begin{eqnarray}}
\newcommand{\eeqn}{\end{eqnarray}}
\newcommand{\AdS}{anti--de~Sitter }
\newcommand{\AAdS}{\mbox{(anti--)}de~Sitter }
\newcommand{\AAN}{\mbox{(anti--)}Nariai }
\newcommand{\AS}{Aichelburg-Sexl }
\newcommand{\pa}{\partial}
\newcommand{\pp}{{\it pp\,}-}
\newcommand{\ba}{\begin{array}}
\newcommand{\ea}{\end{array}}

\newcommand{\M}[3] {{\stackrel{#1}{M}}_{{#2}{#3}}}
\newcommand{\m}[3] {{\stackrel{\hspace{.3cm}#1}{m}}_{\!{#2}{#3}}\,}

\newcommand{\tr}{\textcolor{red}}
\newcommand{\tb}{\textcolor{blue}}
\newcommand{\tg}{\textcolor{green}}

\def\a{\alpha}
\def\b{\beta}
\def\g{\gamma}
\def\de{\delta}

\def\E{{\cal E}}
\def\B{{\cal B}}
\def\R{{\cal R}}
\def\F{{\cal F}}
\def\L{{\cal L}}

\def\e{e}
\def\bb{b}

\newtheorem{theorem}{Theorem}[section] 
\newtheorem{cor}[theorem]{Corollary} 
\newtheorem{lemma}[theorem]{Lemma} 
\newtheorem{proposition}[theorem]{Proposition}
\newtheorem{definition}[theorem]{Definition}
\newtheorem{remark}[theorem]{Remark}

\title{Lovelock vacua with a recurrent null vector field}

\author[1,2]{Marcello Ortaggio\thanks{ortaggio(at)math(dot)cas(dot)cz}}

\affil[1]{Institute of Mathematics of the Czech Academy of Sciences, \newline \v Zitn\' a 25, 115 67 Prague 1, Czech Republic}
\affil[2]{Instituto de Ciencias F\'{\i}sicas y Matem\'aticas, Universidad Austral de Chile, \newline Edificio Emilio Pugin, cuarto piso, Campus Isla Teja, Valdivia, Chile}

\maketitle

\abstract{Vacuum solutions of Lovelock gravity in the presence of a recurrent null vector field (a subset of Kundt spacetimes) are studied. 
We first discuss the general field equations, which constrain both the base space and the profile functions. While choosing a ``generic'' base space puts stronger constraints on the profile, in special cases there also exist solutions containing arbitrary functions (at least for certain values of the coupling constants). These and other properties (such as the \pp waves subclass and the overlap with VSI, CSI and universal spacetimes) are subsequently analyzed in more detail in lower dimensions $n=5,6$ as well as for particular choices of the base manifold. 
The obtained solutions describe various classes of non-expanding gravitational waves propagating, e.g., in Nariai-like backgrounds $M_2\times\Sigma_{n-2}$. An appendix contains some results about general (i.e., not necessarily Kundt) Lovelock vacua of Riemann type III/N, and of Weyl and traceless-Ricci type III/N. For example, it is pointed out that for theories admitting a triply degenerate maximally symmetric vacuum, all the (reduced) field equations are satisfied identically, giving rise to large classes of exact solutions.}

\vspace{.2cm}
\noindent

%


\section{Introduction}
\label{intro}

\subsection{Background}

An impressive catalog of exact solutions to Einstein's equations has been obtained over the past century \cite{Stephanibook,GriPodbook}. One particularly interesting family of spacetimes is given by the Kundt metrics, defined by the presence of a null vector field with zero twist, shear and expansion \cite{Kundt61}. These solutions may describe, for example, non-expanding gravitational and electromagnetic waves propagating in various backgrounds, such as Minkowski, (anti-)de~Sitter and (anti-)Nariai \cite{Stephanibook,GriPodbook,Kundt61,PodOrt03}. In any dimension, they contain all near-horizon geometries \cite{KunLuc13} and all VSI (vanishing scalar invariants) spacetimes \cite{Coleyetal04vsi}. They are also important in connection with CSI (constant scalar invariants) spacetimes \cite{ColHerPel09a}, with spacetimes characterized by their curvature invariants \cite{ColHerPel09a}, as well as with universal metrics \cite{Guven87,AmaKli89,HorSte90,Coleyetal08,HerPraPra17} (see, e.g., the review \cite{OrtPraPra13rev} for a summary of some of these properties in arbitrary dimension).

The Kundt family includes, in particular, all spacetimes admitting a {\em recurrent null vector field} \cite{Walker50} (which in turn include \pp waves \cite{Brinkmann25}\footnote{By \pp waves we mean spacetimes admitting a covariantly constant null vector field. These originally appeared in the context of Einstein spacetimes that are conformal to Einstein spacetimes \cite{Brinkmann25} and since then have been studied thoroughly in four \cite{Stephanibook} and higher dimensions \cite{OrtPraPra13rev}.}). This invariantly defined subfamily of Kundt metrics has attracted considerable interest, e.g., because it defines metrics with reduced holonomy contained in Sim$(n-2)$ \cite{KerGol61_0,DebCah61,GibPop08} (of some interest in loop quantum gravity, cf., e.g., \cite{Lewand92} and references therein) and because of its role in the context of universal metrics \cite{Coleyetal08,HerPraPra17}. Moreover, it contains all direct product spacetimes of the form $M_2\times\Sigma_{n-2}$ (which intersect static near-horizon geometries \cite{KunLuc13}). Vacuum solutions to Einstein's theory admitting a recurrent null vector field were studied in four dimensions in \cite{Kundt61,KerGol61} ($\Lambda=0$) and in \cite{LerMcL73,Lewand92} ($\Lambda\neq0$). In view of the increasing interest in higher dimensional gravity, more recently this class of Einstein spacetimes has been discussed in $n$~dimensions \cite{GibPop08} (where other applications are also mentioned).  

However, in more than four dimensions, Einstein gravity can be considered a special case of Lovelock gravity. The latter defines the most general class of theories 
whose field equations are expressed as the vanishing of a symmetric, divergence-free, rank-2 tensor constructed from the metric and its first two derivatives \cite{Lovelock71} (cf., e.g., \cite{GarGir08,Charmousis09,PadKot13} for recent reviews). 
The Lovelock action consists of a finite sum of terms of various order in the curvature tensor with arbitrary coupling constants. In particular, truncating the sum at the linear order gives rise to the Einstein-Hilbert action with a cosmological constant, while the quadratic term corresponds to the Gauss-Bonnet invariant. Such type of interactions is thus also interesting in the low-energy limit of string theory \cite{Zwiebach85}.

It is therefore natural to investigate how known exact solutions of Einstein's theory are modified in the presence of higher order curvature terms in the field equations. Various results have already been obtained, especially with regard to black hole spacetimes (a number of references can be found in \cite{GarGir08,Charmousis09,PadKot13}). 
The purpose of the present paper is to investigate $n$-dimensional vacuum solutions of Lovelock gravity which possess a recurrent null vector field, which appears to be a less explored area. Differences with respect to (and a few similarities with) Einstein's theory will be pointed out, and a few examples presented. 

In the rest of this section we recall the general form of Lovelock's vacuum equations and fix the notation. In section~\ref{sec_recurrent} we summarize the main properties of spacetimes admitting a recurrent null vector field and write down the corresponding curvature components, which will be employed in the following. The field equations for the general line-element are worked out and discussed in section~\ref{sec_equations}. Subsequent sections analyze those in more detail, and present some examples, in special cases such as lower dimensions $n=5,6$ (where only the Gauss-Bonnet term survives; section~\ref{sec_lower}), \pp waves (section~\ref{sec_pp}), for metrics with a base space of constant curvature (either zero or non-zero; section~\ref{sec_const_curv}) or with a base space given by the direct product of two spaces of constant curvature (section~\ref{sec_products}). In the appendix, a few results about general Lovelock vacua of Riemann type III/N or of Weyl and traceless-Ricci type III/N are obtained  (without assuming the presence of a recurrent null vector field).

\subsection{Field equations}

The Lovelock Lagrangian density 
\be
 {\cal L}=\sqrt{-g}\sum_{k=0}^{[(n-1)/2]}c_k{\cal L}^{(k)} , \qquad\qquad {\cal L}^{(k)}=\frac{1}{2^k}\delta_{a_1b_1\ldots a_kb_k}^{c_1d_1\ldots c_kd_k}R_{c_1d_1}^{a_1b_1}\ldots R_{c_kd_k}^{a_kb_k} ,
 \label{Lagr}
\ee
gives rise, in vacuum, to the field equations \cite{Lovelock71}
\be
 G^a_c\equiv\sum_{k=0}^{[(n-1)/2]}c_k G^{a(k)}_{c}=0, \qquad\qquad G^{a(k)}_{c}=-\dfrac{1}{2^{k+1}}\delta^{aa_1b_1\ldots a_kb_k}_{cc_1d_1\ldots c_kd_k}R^{c_1d_1}_{a_1b_1}\ldots R^{c_kd_k}_{a_kb_k} ,
 \label{fieldeqns}
\ee
where $\delta^{a_1\ldots a_p}_{c_1\ldots c_p}=p!\delta^{a_1}_{[c_1}\ldots\delta^{a_p}_{c_p]}$ and $c_k$ are coupling constants. Special choices of the latter correspond, e.g., to Einstein's ($c_k=0$ for $k>1$) or Gauss-Bonnet's theory ($c_k=0$ for $k>2$), including a possible cosmological constant. Indeed, ${\cal L}^{(0)}=1$ corresponds to a cosmological term, while ${\cal L}^{(1)}=R$ is the standard Einstein-Hilbert term (correspondingly, $G^{a(0)}_{c}=-\frac{1}{2}\delta^a_c$, while $G^{a(1)}_{c}=R^a_c-\frac{1}{2}Rg^a_c$ is the Einstein tensor).
The tensors $G^{a(k)}_{c}$ satisfy the Bianchi-like identities $G^{a(k)}_{c;a}=0$ and their traces give $(2k-n){\cal L}^{(k)}=2G^{a(k)}_a$.
 The upper bound in the above summations is due to the fact that ${\cal L}^{(k)}$ does not contribute to the field equations when $2k=n$, and ${\cal L}^{(k)}=0$ for $2k>n$ (i.e., $G^{a(k)}_{c}=0$ for $2k\ge n$). In particular, the case $n=4$ reduces to standard General Relativity, which needs not be discussed again here -- hence we will assume $n\ge 5$ from now on. 

Let us finally recall the useful identity 
\be
 \delta^{a_1\ldots a_s c_{s+1}\ldots c_{k}}_{b_1\,\ldots b_s c_{s+1}\ldots c_{k}}=\frac{(n-s)!}{(n-k)!}\delta^{a_1\ldots a_s}_{b_1\ldots b_s} \qquad (0\le s\le k\le n),
 \label{ident_delta}
\ee
to be employed throughout the paper.

\paragraph{Notation} In $n$ dimensions, we employ a frame of $n$ real vectors $\bm_{(a)} $ which consists of two null vectors $\bl\equiv{\mbox{\boldmath{$m_{(0)}$}}}$,  $\bn\equiv{\mbox{\boldmath{$m_{(1)}$}}}$ and $n-2$ orthonormal spacelike vectors $\bm_{(i)} $, with $a, b\ldots=0,\ldots,n-1$ while $i, j  \ldots=2,\ldots,n-1$, such that $g_{ab}=\ell_a n_b+n_a\ell_b+m_{(i)a}m_{(i)b}$ (cf., e.g., \cite{OrtPraPra13rev} and references therein). For indices $i, j, \ldots$ there is no need to distinguish between subscripts and superscripts. 
Covariant derivatives in the directions of the frame vectors are denoted as
\be
 D \equiv \ell^a \nabla_a, \qquad \T\equiv n^a \nabla_a, \qquad \delta_i \equiv m^{(i)a} \nabla_a . 
 \label{covder}
\ee

\section{Spacetimes with a recurrent null vector field}

\label{sec_recurrent}

In $n$ dimensions, spacetimes that admit a recurrent null vector field $\bl$ \cite{Walker50}, i.e.,
\be
 \ell_{a;b}=\ell_ap_b , \qquad \ell_a \ell^a=0 , 
\label{recur_def}
\ee
coincide with the subclass $\tau_i=0$ ($\Leftrightarrow \ell_{[c}\ell_{a];b}=0$) of the Kundt metrics \cite{Kundt61}. 
Up to a rescaling, $\bl$ can always be chosen such that \cite{DebCah61,KerGol61,Kundt61}\footnote{The results of \cite{DebCah61,KerGol61,Kundt61} were derived in four dimensions, but extend readily to any~$n$, cf., e.g., \cite{GibPop08,OrtPraPra13rev}.} 
\be
 \ell_{a;b}=L\ell_a\ell_b ,
\label{der_l}
\ee
which we will assume hereafter. The special case $L=0$ defines \pp waves \cite{Brinkmann25}, for which $\bl$ is covariantly constant.

With \eqref{der_l}, the Ricci identity readily implies (cf. \cite{OrtPraPra07,GibPop08,Ortaggio09})
\beqn
 & & R_{0i0j}=0 , \\
 & & R_{0i01}=0 , \qquad R_{0ijk}=0 , \\
 & & R_{0i1j}=0 , \qquad R_{01ij}=0 , \label{Riem0=0}
\eeqn
along with 
\be
 R_{0101}=-DL , \qquad  R_{101i}=-\delta_iL . 
\label{R0101}
\ee
{\em The Riemann tensor is thus of type II or more special} \cite{Milsonetal05,OrtPraPra13rev}, aligned with $\bl$. Note also that, by the first of \eqref{Riem0=0}, the (A)dS spacetime does not belong to this class (more generally, spacetimes which differ from (A)dS only by their negative boost-weight (b.w.) Riemann components cannot occur here -- cf. also appendix~\ref{app_RicciIII}).

In adapted coordinates such that $\ell^a\pa_a=\partial_r$ and $\ell_a\d x^a=\d u$, the line-element takes the form \cite{Walker50}
\be
 \d s^2 =2\d u\left[\d r+H(u,r,x)\d u+W_\alpha(u,x)\d x^\alpha\right]+ g_{\alpha\beta}(u,x) \d x^\alpha\d x^\beta , 
 \label{Kundt_gen}
\ee
where $\alpha,\beta=2, \dots, n-1$. Note that the $r$-dependence can only appear in $H$. In these coordinates one has in \eqref{der_l}
\be
  L=H_{,r} .
\ee
so that \eqref{R0101} becomes 
\be
 R_{0101}=-H_{,rr} , \qquad R_{101i}=-\delta_i H_{,r} , 
\label{R0101_2}
\ee
in any frame adapted to $\bl$. If we choose a natural frame by taking
\be
 n_a\d x^a=\d r+H\d u+W_\alpha\d x^\alpha , 
\ee
with the remaining $m_{(i)a}\d x^a=m_{(i)\a}\d x^\a$ defining an o.n. frame for the base space metric $g_{\a\b}$,\footnote{Equivalently, in terms of the contravariant components one has $\bn=\pa_u-H\pa_r$ and $\bm_{(i)} =m_{(i)}^\a(\pa_\a-W_\a\pa_r)$, with $m_{(i)}^\a m_{(j)\a}=\delta_{ij}$.} one further finds (see \cite{ColHerPel06} or use (11p,\cite{OrtPraPra07})\footnote{A missing term in (11p,\cite{OrtPraPra07}) has been pointed out in footnote~7 of \cite{OrtPraPra13rev}.})
\be
  R_{ijkl}=\hat R_{ijkl} , \label{Rijkl} 
\ee
where, from now on, a hat denotes quantities intrinsic to the geometry of $g_{\a\b}$. From \eqref{R0101_2} and \eqref{Rijkl}, it follows that {\em the Riemann type is III (or more special) iff $H_{,rr}=0$ and $g_{\alpha\beta}$ is flat}. It becomes III(a) when, additionally, $\delta_i H_{,r}=0$, in which case the spacetime is necessarily a \pp wave.\footnote{The conditions $H_{,rr}=0=\delta_i H_{,r}$ mean that $DL=0=\delta_i L$, which enables one to set $L=H_{,r}=0$ by an $r$- and $x$-independent boost (this argument was used in section~6 of \cite{HerPraPra14} in the case of Ricci-flat spacetimes).} 

For the frame Riemann components of negative b.w., let us introduce the compact notation 
\be
 {\cal R}_{i}\equiv R_{101i} , \qquad {\cal R}_{ijk}\equiv R_{1ijk} , \qquad {\cal R}_{ij}\equiv R_{1i1j} . \label{Rcomp_1}
\ee
One then has
\beqn 
 & & {\cal R}_{i}=m_{(i)}^\a(-H_{,r\a}+W_\a H_{,rr}) , \label{Ri} \\ 
 & & {\cal R}_{ijk}=m_{(i)}^\a m_{(j)}^\b m_{(k)}^\g R_{u\a\b\g} , \label{Rijk} \\
 & & {\cal R}_{ij}=m_{(i)}^\a m_{(j)}^\b (R_{u\a u\b}+W_\a H_{,r\b}+W_\b H_{,r\a}-W_\a W_\b H_{,rr}) , \label{Rij}
\eeqn
where (cf. \cite{PodSva13}, up to a minor reshuffling) 
\beqn
 R_{u\a\b\gamma}= & & -W_{[\b||\gamma]\a}+\frac{1}{2}(g_{\a\b,u||\gamma}-g_{\a\gamma,u||\b}) , \label{Ruabg} \\
 R_{u\a u\b}=& & -H_{||\a\b}+H_{,r}\left(W_{(\a||\b)}-\frac{1}{2}g_{\a\b,u}\right)+\frac{1}{2}\left(W_{\a,u||\b}+W_{\b,u||\a}\right)-\frac{1}{2}g_{\a\b,uu} \nonumber \\
 & & {}+g^{\gamma\delta}\left(W_{[\a||\gamma]}-\frac{1}{2}g_{\a\gamma,u}\right)\left(W_{[\b||\delta]}-\frac{1}{2}g_{\b\delta,u}\right) . \label{Ruaub} 
\eeqn 
In the above expressions, $W^{\a}\equiv g^{\a\b}W_\b$, and $||$ is the covariant derivative in the base space. Eq.~\eqref{Ri} is equivalent to the second of~\eqref{R0101_2}. For later purposes, let us note that, thanks to \eqref{Ruabg}, ${\cal R}_{ijk}$ does not contain $H$ and thus does not depend on $r$. In addition, in \eqref{Ri}, \eqref{Rij} and \eqref{Ruaub}, $H$ appears always differentiated (once or twice) w.r.t. to $r$, except for the term $H_{||\a\b}$ in $R_{u\a u\b}$.

It follows that the non-zero components of the Ricci tensor are given by (cf. also \cite{GibPop08})
\beqn
 R_{01}= & & -R_{0101}=H_{,rr} , \qquad R_{ij}=\hat R_{ij} , \label{R01} \\
 R_{1i}= & & -{\cal R}_{i}+{\cal R}_{jij}=m_{(i)}^\a\left[-H_{,rr}W_\a+H_{,r\a}-W^{\ \ \ \ \ \ \b}_{[\a||\b]}+\frac{1}{2}{g_{\a\b,u}}^{||\b}-(\ln\sqrt{\hat g})_{,u\a}\right] , \label{R1i} \\
 R_{11}=& & {\cal R}_{ii}=-\Delta H-H_{,rr}W^{\a}W_\a+2H_{,r\a}W^\a+H_{,r}\left[W^{\a}_{\ \ ||\a}-(\ln\sqrt{\hat g})_{,u}\right]+W^{[\a||\b]}W_{[\a||\b]} \nonumber \\
			& &  \qquad\qquad {}+W^{\ \ \ ||\a}_{\a,u}-(\ln\sqrt{\hat g})_{,uu}+\frac{1}{4}g^{\a\b}_{\ \ ,u}g_{\a\b,u} , \label{R11}
\eeqn 
where $\Delta$ is the Laplace operator in the geometry of the base metric $g_{\a\b}$, while 
\be
 R=2H_{,rr}+\hat R .
 \label{R}
\ee
The Ricci type is III when $H_{,rr}=0$ and the base space is Ricci-flat. The Ricci type N occurs if, additionally, $R_{1i}=0$.

To conclude this section, some comments on the coordinate freedom preserving~\eqref{Kundt_gen} will be useful for later purposes. First, an argument of \cite{Brinkmann25} can be easily extended to the present context to show that, by a suitable coordinate transformation, one can always set 
\be
  W_\alpha=0 .
\ee
However, this will in general affect the form of $g_{\a\b}$. When $L=H_{,r}=0$ (i.e., for \pp waves) one can further set $H=0$ \cite{Brinkmann25}. On the other hand, when $W_\alpha$ is a gradient (or, in particular, a constant), it can be simply removed by a suitable shift $r\to r+r_0(u,x)$, which leaves $g_{\a\b}$ unchanged (while $H\to H+r_{0,u}$,). Similarly, when $H=H(u)$, a shift $r\to r+r_0(u)$ (leaving $g_{\a\b}$ and $W_\alpha$ unchanged) can be used to set $H=0$. Finally, a term $H=rh(u)$ can be removed by a rescaling $u\to f(u)$, $r\to r/f'(u)$. These and other  coordinate transformations were discussed in \cite{Brinkmann25, Kundt61,DebCah61,LerMcL73} (see also, e.g., \cite{Coleyetal06}).

\section{Lovelock equations}

\label{sec_equations}

Since the Riemann type of metric~\eqref{Kundt_gen} is II, the only possible non-zero (mixed) components of $G^a_c$ are (ordered by b.w.) $G^0_0$, $G^i_j$, $G^0_i$ and $G^0_1$. Imposing $G^0_0=0$ and $G^i_j=0$ gives, respectively,\footnote{The following summations contain quantities living in the $(n-2)$-dimensional base manifold, which explains the new upper bounds on $k$.}
\beqn
 & & \sum_{k=0}^{[(n-2)/2]} c_k\hat{\cal L}^{(k)}=0 , \label{Lov01} \\
 & & \sum_{k=0}^{[(n-3)/2]} \left[c_k+2c_{k+1}(k+1)H_{,rr}\right]\hat G^{i(k)}_j=0 . \label{Lovij}
\eeqn

Requiring $G^0_i=0$ gives
\be
 \sum_{k=0}^{[(n-3)/2]} c_{k+1}(k+1)\left[(2{\cal R}^j-{\cal R}_s^{js})\hat G^{i(k)}_j-{\cal R}_s^{ij}\hat G^{s(k)}_j-\frac{k}{2^k}{\cal R}_p^{js}\delta^{p i \, j_1 i_2 j_2\ldots i_{k}j_{k}}_{s l_1 s_1 l_2 s_2\ldots l_{k}s_{k}}\hat R_{j j_1}^{l_1 s_1} \hat R_{i_2 j_2}^{l_2 s_2}\ldots \hat R_{i_k j_k}^{l_k s_k} \right]=0 . \label{Lov1i}
\ee

Finally, from $G^0_1=0$ one obtains
\be
 \sum_{k=0}^{[(n-3)/2]} c_{k+1}(k+1)\left[{\cal R}^j_l\hat G^{l(k)}_{j}-\frac{k}{2^{k+1}}\delta^{ijn i_1j_1\ldots i_{k-1}j_{k-1}}_{lsp  \,l_1s_1\ldots l_{k-1}s_{k-1}}{\cal R}_i^{ls}{\cal R}^p_{jn} \hat R_{i_1 j_1}^{l_1 s_1} \ldots \hat R_{i_{k-1} j_{k-1}}^{l_{k-1} s_{k-1}} \right]=0 . \label{Lov11}
\ee 

In the above equations, it is understood that expressions~\eqref{Ri}--\eqref{Rij} hold.

Eqs.~\eqref{Lov01} and \eqref{Lovij} represent restrictions on the metric $g_{\a\b}$ of the base manifold (which must be an Einstein space in Einstein's theory, but this is not the case in general). Further discussion requires to consider separately the two possible cases $H_{,rrr}=0$ and $H_{,rrr}\neq0$ (as can be seen by taking the $r$-derivative of \eqref{Lovij}).  

Before proceeding, let us observe that in the special case of product spaces $M_2\times\Sigma_{n-2}$ (i.e., $H=H(u,r)$, $W_\a=0$, $g_{\a\b,u}=0$ in \eqref{Kundt_gen}), the quantities \eqref{Ri}--\eqref{Rij} are zero, so that the field equations \eqref{Lov1i} and \eqref{Lov11} are satisfied identically. The function $H$ thus enters only in equation~\eqref{Lovij} (via $H_{,rr}$, which gives the Gaussian curvature of $M_2$).

\subsection{Case $H_{,rrr}=0$}

Eq.~\eqref{Lovij} implies that $H_{,rrr}=0$ (unless further restrictions on the base space or on the $c_k$ hold, see section~\ref{subsec_Hspecial} and the examples referred to there), i.e.,
\be
 H(u,r,x)=r^2H^{(2)}(u,x)+rH^{(1)}(u,x)+H^{(0)}(u,x) . 
 \label{deg_Kundt2}
\ee
Hence, eq.~\eqref{Lovij} can be written as
\be
 \sum_{k=0}^{[(n-3)/2]} \left[c_k+4c_{k+1}(k+1)H^{(2)}\right]\hat G^{i(k)}_j=0 . 
 \label{Lovij_1} 
\ee
Once $H^{(2)}$ has been specified, one has to determine a base manifold that solves \eqref{Lov01} and \eqref{Lovij_1} simultaneously.\footnote{In the special case $H^{(2)}=$const, eq.~\eqref{Lovij_1} means that the base space must be itself a solution of a $(n-2)$-dimensional Lovelock theory (with rescaled coefficients $\tilde c_k\equiv c_k+4c_{k+1}(k+1)H^{(2)}$). This happens, for example, in the case of \pp waves, for which $H^{(2)}=0=H^{(1)}$ (cf. section~\ref{sec_pp}). In general, taking the divergence (in the base space) of \eqref{Lovij_1} shows that $H^{(2)}_{,\a}$ must be an eigenvector with eigenvalue zero of the tensor $B^\a_\b\equiv\sum_{k=0}^{[(n-3)/2]} c_{k+1}(k+1)\hat G^{\a(k)}_\b$. In particular, if $B^\a_\b$ has full rank (as happens, e.g., in Einstein's theory, or when the base space has non-zero constant curvature and the theory is generic, cf. section~\ref{subsubsec_const_generic}) then necessarily $H^{(2)}_{,\a}=0$.\label{footn_rank}}
Subsequently, the final step consists in solving \eqref{Lov1i} and \eqref{Lov11}, where the remaining functions $H^{(1)}$, $H^{(0)}$ and $W_\a$ also enter (using \eqref{deg_Kundt2} and \eqref{Ri}--\eqref{Ruaub}, eqs.~\eqref{Lov1i} and \eqref{Lov11} split into various equations of different orders in  $r$ -- however, some of these are identically satisfied by virtue of the others, cf. footnote~8 of \cite{OrtPra16} for a related discussion in Einstein's theory). Examples of solutions in various particular cases will be given in the following sections~\ref{subsec_5_6_generic}, \ref{sec_pp}, \ref{subsubsec_flat_gen}, \ref{subsubsec_const_generic}, \ref{sec_products}.

It is worth noticing that, in the present case, the spacetimes belong to the {\em degenerate} Kundt class \cite{ColHerPel09a,Coleyetal09}. Recall also that all Einstein spacetimes in the Kundt class are necessarily degenerate, so that possible metrics that solve simultaneously both theories (Einstein and Lovelock) must fall within the present branch of solutions.

This branch also contains product spaces $M_2\times\Sigma_{n-2}$ with an $M_2$ of constant Gaussian curvature~$K$, which requires $H=\frac{K}{2}r^2$ (up to a removable term $rH^{(1)}(u)+H^{(0)}(u)$, cf. section~\ref{sec_recurrent}). The metric $g_{\a\b}$ of $\Sigma_{n-2}$ must solve a $(n-2)$-dimensional Lovelock theory (cf. footnote~\ref{footn_rank}) and additionally obey \eqref{Lov01}.

\subsection{Case $H_{,rrr}\neq0$}

\label{subsec_Hspecial}

By considering its $r$-derivative, here \eqref{Lovij} splits into
\be
 \sum_{k=0}^{[(n-3)/2]} c_k\hat G^{i(k)}_j=0  , \qquad \sum_{k=0}^{[(n-3)/2]} c_{k+1}(k+1)\hat G^{i(k)}_j=0 . \label{Lovij_2} 
\ee

This means that the base space must simultaneously solve two different Lovelock's theories in $(n-2)$ dimensions (and additionally obey \eqref{Lov01}) -- one defined by the coefficients $c_k$, and another one by the coefficients $\tilde c_k\equiv (k+1)c_{k+1}$ (however, eqs.~\eqref{Lovij_2} reduce to a single equation in a special theory with $a^{k+1}(k+1)!c_{k+1}=c_0$, where $a\neq0$ is a constant -- a simple example will be given in section~\ref{subsec_specialGB_lower}). 

Assuming a suitable base space is found, one has then to solve \eqref{Lov1i} and \eqref{Lov11}, which (using~\eqref{Lovij_2}) reduce to
\beqn
 & & {\cal R}_p^{js}\sum_{k=0}^{[(n-3)/2]} c_{k+1}(k+1)\frac{k}{2^k}\delta^{p i \, j_1 i_2 j_2\ldots i_{k}j_{k}}_{s l_1 s_1 l_2 s_2\ldots l_{k}s_{k}}\hat R_{j j_1}^{l_1 s_1} \hat R_{i_2 j_2}^{l_2 s_2}\ldots \hat R_{i_k j_k}^{l_k s_k}=0 , \label{Lov1i_spec} \\
 & & {\cal R}_i^{ls}{\cal R}^p_{jn} \sum_{k=0}^{[(n-3)/2]} c_{k+1}(k+1)\frac{k}{2^{k+1}}\delta^{ijn i_1j_1\ldots i_{k-1}j_{k-1}}_{lsp  \,l_1s_1\ldots l_{k-1}s_{k-1}}\hat R_{i_1 j_1}^{l_1 s_1} \ldots \hat R_{i_{k-1} j_{k-1}}^{l_{k-1} s_{k-1}}=0 . \label{Lov11_spec}
\eeqn

The function $H$ does not appear in any of the field equations and thus remains {\em arbitrary}.\footnote{The existence of solutions containing arbitrary functions is a known feature of (certain) Lovelock gravities. To the author's knowledge, this was first noted in \cite{Wheeler86} (for a class of metrics that overlaps only marginally with those considered in the present paper -- cf. section~\ref{subsubsec_const_deg}). The appearence of these ``geometrically free'' solutions \cite{Wheeler86} characterizes, in particular, theories that admit a single (A)dS vacuum \cite{CriTroZan00}. Interestingly, the latter include Chern-Simons ($n$ odd) \cite{Chamseddine89,BanTeiZan94} and Born-Infeld ($n$ even) \cite{BanTeiZan94} gravity.} 
In the special case ${\cal R}_p^{js}=0$ (arising, e.g., when $g_{\a\b,u}=0=W_\a$), eqs.~\eqref{Lov1i_spec} and \eqref{Lov11_spec} are satisfied identically.
Recall that spacetimes with $H_{,rrr}\neq0$ cannot be Einstein and cannot be \pp waves. 
Some examples are mentioned in sections~\ref{subsubsec_special2_lower}, \ref{subsubsec_flat_deg}, \ref{subsubsec_const_deg}, \ref{sec_products} -- however, all of these occur in {\em non-generic} Lovelock theories.

\section{Lower dimensions: $n=5,6$ (Gauss-Bonnet)}

\label{sec_lower}

The b.w.~0 equations~\eqref{Lov01} and \eqref{Lovij} contain the Lovelock scalars and tensors of the $(n-2)$-dimensional base space. When the latter has dimension smaller than five (i.e., when $n=5, 6$), only the terms $\hat{\cal L}^{(0)}$, $\hat{\cal L}^{(1)}$ (and, for $n=6$, $\hat{\cal L}^{(2)}$), $\hat G^{i(0)}_j$ and $\hat G^{i(1)}_j$ are non-zero, so that \eqref{Lov01} and \eqref{Lovij} become considerably simpler. Recall also that, for $n=5, 6$, Lovelock's theory reduces to Gauss-Bonnet's theory.\footnote{However, the field equations of Gauss-Bonnet's theory for $n\ge 7$ would be more complicated, with the Gauss-Bonnet tensor of the base space $\hat G^{i(2)}_j$ entering \eqref{Lovij} -- in contrast to \eqref{Lovij_lower} below. Solutions of Gauss-Bonnet gravity of the form~\eqref{Kundt_gen} in the special case $W_{\a}=0$ have been studied in arbitrary dimensions in \cite{HruPod16}.} It is worth discussing this case in some detail. We can assume $c_2\neq0$, since one is left with Einstein's theory otherwise. 

Eqs. \eqref{Lov01} and \eqref{Lovij} become
\beqn
 & & c_0+c_1\hat R+c_2\hat{\cal L}^{(2)}=0 , \label{Lov01_lower} \\
 & & -\frac{1}{2}(c_0+2c_1H_{,rr})\delta^i_j+(c_1+4c_2H_{,rr})\hat G^{i(1)}_j=0 , \label{Lovij_lower}
\eeqn
with $\hat{\cal L}^{(2)}=0$ identically when $n=5$. Eq.~\eqref{Lovij_lower} shows that $g_{\a\b}$ must be an Einstein metric, unless $c_1+4c_2H_{,rr}=0$, the latter case being possible only when $c_1^2-2c_0c_2=0$. Therefore, the consequences of \eqref{Lov01_lower}, \eqref{Lovij_lower} and of the negative b.w. equations need to be studied separately in the following two cases. (We note that, for $n=6$, the simplifying assumptions $H=H(u,r)$ and $W_{\a}=0=g_{\a\b,u}$ give rise to class-III of \cite{Bogdanosetal09}.)

\subsection{Generic Gauss-Bonnet theory ($c_1^2-2c_0c_2\neq 0$)}

\label{subsec_5_6_generic}

For {\em generic} values of the $c_k$ (in particular, $c_1^2-2c_0c_2\neq 0$), eq.~\eqref{Lovij_lower} implies that $H$ takes the form \eqref{deg_Kundt2}, with
\be
 2H^{(2)}=-\frac{1}{2}\frac{(n-2)c_0+(n-4)c_1\hat R}{(n-2)c_1+2(n-4)c_2\hat R} . 
 \label{H_lower}
\ee
(We observe that the denominator in \eqref{H_lower} cannot vanish here -- by \eqref{Lovij_lower} this would require $c_1^2-2c_0c_2=0$, which is the case studied in section~\ref{subsec_specialGB_lower} below.) Eq.~\eqref{Lovij_lower} also implies that $g_{\a\b}$ is an {\em Einstein} metric (so that $\hat R=\hat R(u)$; one has in addition the condition \eqref{Lov01_lower}, which implies $\hat{\cal L}^{(2)}=\hat{\cal L}^{(2)}(u)$).

Thanks to this, eqs.~\eqref{Lov1i} and \eqref{Lov11} take the form
\beqn
 & & c_1R_{1i}+2c_2\left(\frac{n-4}{n-2}\hat R R_{1i}-\frac{2\hat R}{n-2}{\cal R}^j_{ij}+\hat R^{jl}_{im}{\cal R}^m_{jl}\right)=0 , \label{Lov1i_lower} \\
 & & c_1R_{11}+2c_2\left(\frac{n-4}{n-2}\hat R R_{11}-2{\cal R}^{ij}_i{\cal R}^l_{lj}+{\cal R}^{jl}_i{\cal R}^i_{jl}\right)=0 . \label{Lov11_lower} 
\eeqn
A more explicit but lengthier form of these equations can be obtained using \eqref{Rijk} (with \eqref{Ruabg}), \eqref{R1i}, \eqref{R11} and \eqref{deg_Kundt2}, \eqref{H_lower}, if desired. These can be integrated to determine $H^{(1)}$ and $H^{(0)}$.

For $n=5$, further simplifications occur. First, one can use $c_0+c_1\hat R=0$ (from \eqref{Lov01_lower}) to simplify \eqref{H_lower}. This shows that for $c_0\neq 0$ these spacetimes cannot be Einstein (which would require $2H^{(2)}=\hat R/(n-2)$), and are thus genuine Lovelock solutions (an obvious exception to this would be the case $c_2=0$, which we excluded). Moreover, eq.~\eqref{Lov1i_lower} becomes simpler since the base space is three dimensional (and Einstein), hence of constant curvature, i.e., $\hat R^{jl}_{im}=\frac{1}{3}\hat R\delta^j_{[i}\delta^l_{m]}$ -- this case is thus also contained in the more detailed discussion of section~\ref{subsubsec_const_generic}.

\subsubsection{An example}

\label{subsubsec_5_6_example}

Explicit examples can be easily obtained by making some simplifying assumptions on the metric. For instance, when ${\cal R}^{ij}_i=0$, eqs.~\eqref{Lov1i_lower} and \eqref{Lov11_lower} reduce to $R_{1i}=0=R_{11}$. The simplest possible case when all these conditions are satisfied occurs for $g_{\a\b,u}=0$ (giving $\hat R=$const) and $W_\a=0$, so that the final line-element reduces to (after transforming away a possible term $rH^{(1)}(u)$ in \eqref{Lovij_1})
\be
 \d s^2 =2\d u\d r+2\left[H^{(2)}r^2+H^{(0)}(u,x)\right]\d u^2+ g_{\alpha\beta}(x) \d x^\alpha\d x^\beta , 
 \label{lower_ex}
\ee
with \eqref{H_lower} (so that $H^{(2)}$ is a constant here) and
\be
 \Delta H^{(0)}=0 ,
 \label{lower_ex_eq}
\ee
and where $g_{\alpha\beta}$ can be any Einstein metric obeying \eqref{Lov01_lower}.\footnote{For $n=5$, $g_{\alpha\beta}$ must be of constant curvature, and \eqref{Lov01_lower} is simply a ``normalization'' condition, which can always be satisfied by a constant rescaling of $g_{\alpha\beta}$. For $n=6$, \eqref{Lov01_lower} additionally requires that $\hat{\cal L}^{(2)}$ be a constant -- this is compatible, for example, with $g_{\alpha\beta}$ being a 4-space of constant curvature (as in section~\ref{subsubsec_const_generic}) or a product of two identical 2-spaces of constant Gaussian curvature $\lambda$ (with $c_0+4c_1\lambda+8c_2\lambda^2=0$; cf. also section~\ref{sec_products}), 
the latter choice giving rise to an Einstein spacetime.\label{footn_ex_5D_6D}} Spacetime~\eqref{lower_ex}, \eqref{lower_ex_eq} is generically of Ricci type D and Weyl type II. For $H^{(0)}=0$, it reduces to a direct product spacetime $M_2\times\Sigma_{n-2}$ (of Weyl type D), where $M_2$ possesses constant Gaussian curvature $2H^{(2)}$. For $H^{(0)}\neq0$, it describes non-expanding gravitational waves propagating in such $M_2\times\Sigma_{n-2}$ background (related solutions in arbitrary dimension will be discussed in section~\ref{subsubsec_const_generic}). 

These metrics are not \pp waves, except in the special case $H^{(2)}=0$. For $n=5$ this implies $c_0=0$ and thus $\hat R=0$ (cf. \eqref{Lov01_lower}), which means these are Ricci-flat \pp waves of Weyl type N (and therefore universal spacetime \cite{HerPraPra14}, cf. also section~\ref{subsubsec_flat_gen}). For $n=6$ \pp waves, the Einstein base space must satisfy simultaneously \eqref{Lov01_lower} (so that $\hat{\cal L}^{(2)}$ is a constant) and $2c_0+c_1\hat R=0$ (an example is mentioned in footnote~\ref{footn_L=0}).

\subsection{Special case $c_1^2-2c_0c_2=0$}

\label{subsec_specialGB_lower}

This special case contains, in particular, pure Gauss-Bonnet gravity (when $c_0=0=c_1$). It gives rise to two possibilities.

\subsubsection{$c_1+4c_2H_{,rr}=0$}

This means that $H$ takes the form \eqref{deg_Kundt2}, with
\be
 2H^{(2)}=-\frac{c_1}{4c_2} , 
 \label{H_lower_spec}
\ee
and \eqref{Lovij_lower} is identically satisfied, so that $g_{\a\b}$ has to obey \eqref{Lov01_lower} but need not be Einstein (therefore the full spacetime is also generically non-Einstein). Eqs.~\eqref{Lov1i} and \eqref{Lov11} take the form
\beqn
 & & c_1R_{1i}+2c_2\left(-2G^{j(1)}_iR_{1j}-2\hat R^j_l{\cal R}^l_{ij}+\hat R^{jl}_{im}{\cal R}^m_{jl}\right)=0 , \label{Lov1i_lower_deg} \\
 & & c_1R_{11}+2c_2\left(-2G^{j(1)}_i{\cal R}^i_{j}-2{\cal R}^{ij}_i{\cal R}^l_{lj}+{\cal R}^{jl}_i{\cal R}^i_{jl}\right)=0 . 
\eeqn

(For $n=5$, in \eqref{Lov1i_lower_deg} one can use the identity $\hat R^{ij}_{lm}=4\delta^{[i}_{[l}\hat R^{j]}_{m]}-\hat R\delta^i_{[l}\delta^j_{m]}$.) Metric~\eqref{lower_ex}, \eqref{lower_ex_eq} with \eqref{H_lower_spec} is a solution also here, but $g_{\a\b}$ can now be any metric (not necessarily Einstein) subject to \eqref{Lov01_lower}.

For {\em pure Gauss-Bonnet} gravity the above results specialize to
\beqn
 & & H^{(2)}=0 , \qquad \hat{\cal L}^{(2)}=0 , \\
 & & -2G^{j(1)}_iR_{1j}-2\hat R^j_l{\cal R}^l_{ij}+\hat R^{jl}_{im}{\cal R}^m_{jl}=0 , \\
 & & -2G^{j(1)}_i{\cal R}^i_{j}-2{\cal R}^{ij}_i{\cal R}^l_{lj}+{\cal R}^{jl}_i{\cal R}^i_{jl}=0 . 
\eeqn
Here \eqref{lower_ex} is a solution for any $H^{(0)}$, and for any $g_{\a\b}$ for which $\hat{\cal L}^{(2)}=0$ (hence $g_{\a\b}$ is completely arbitrary for $n=5$).

\subsubsection{$c_1+4c_2H_{,rr}\neq0$}

\label{subsubsec_special2_lower}

In this case \eqref{Lovij_lower} implies that $g_{\a\b}$ must be {\em Einstein} (in addition to obeying \eqref{Lov01_lower}), with
\be
 \hat R=-\frac{n-2}{n-4}\frac{c_1}{2c_2} .
 \label{R_low_spec}
\ee
Thanks to this,  eqs.~\eqref{Lov1i} and \eqref{Lov11} take the simpler form 
\beqn
 & & c_1{\cal R}^j_{ij}+(n-4)c_2\hat R^{jl}_{im}{\cal R}^m_{jl}=0 , \\
 & & -2{\cal R}^{ij}_i{\cal R}^l_{lj}+{\cal R}^{jl}_i{\cal R}^i_{jl}=0 . 
\eeqn
For both $n=5,6$, {\em the function $H(u,r,x)$ is arbitrary} (this is thus a special instance of the case of section~\ref{subsec_Hspecial}), therefore these spacetimes are generically non-Einstein. For $n=5$, eqs.~\eqref{R_low_spec}, \eqref{Lov01_lower} imply that $c_0=0=c_1$ and $\hat R=0$, so that only pure Gauss-Bonnet theory is possible in this branch (see below). For $n=6$, a simple example is given by
 metric~\eqref{Kundt_gen} with $W_{\a}=0$ and with $g_{\a\b}$ being the direct product of two 2-spheres of Gaussian curvature $\lambda=-c_1/4c_2>0$ (which satisfies \eqref{R_low_spec} and \eqref{Lov01_lower} simultaneously, cf. also footnote~\ref{footn_ex_5D_6D}; here $g_{\a\b}$ cannot be of constant curvature, cf. section~\ref{subsubsec_const_deg}). 

For {\em pure Gauss-Bonnet} gravity the above results specialize to
\beqn
 & & \hat R=0 , \qquad \hat{\cal L}^{(2)}=0 , \label{R_L2_low_spec} \\
 & & \hat R^{jl}_{im}{\cal R}^m_{jl}=0 , \label{Lov1i_lower_pure} \\
 & &  -2{\cal R}^{ij}_i{\cal R}^l_{lj}+{\cal R}^{jl}_i{\cal R}^i_{jl}=0 . 
\eeqn
Eqs.~\eqref{R_L2_low_spec} imply that the base space is {\em flat}\footnote{For $n=5$ this is obvious, since $g_{\a\b}$ is of constant curvature and has zero Ricci scalar. For $n=6$, $g_{\a\b}$ is in principle only Ricci-flat, but the condition $\hat{\cal L}^{(2)}=0$ then implies that the Riemann tensor must vanish (since the signature of $g_{\a\b}$ is Euclidean).} and \eqref{Lov1i_lower_pure} is thus identically satisfied -- these solutions are included in those of section~\ref{subsubsec_flat_deg}. For example, any metric~\eqref{Kundt_gen} with $g_{\a\b}=\delta_{\a\b}$ and $W_{\a}=0$ is a solution of pure Gauss-Bonnet gravity (but is not Ricci flat, in general).

\section{\pp waves}

\label{sec_pp}

A special subclass of the spacetimes with a recurrent null vector field $\bl$ consists of \pp waves \cite{Brinkmann25}, for which $\bl$ is covariantly constant and holonomy reduces to (a subgroup of) E$(n-2)$\cite{GibPop08}. These metrics are defined by~\eqref{Kundt_gen} with $H_{,r}=0$, i.e.,
\be
 H=H^{(0)}(u,x) ,
\ee
so that (cf. \eqref{R0101_2}, \eqref{Ri}, \eqref{Rij}) $R_{0101}=0$ and 
\be
  {\cal R}_i=0 , \qquad  {\cal R}_{ij}=m_{(i)}^\a m_{(j)}^\b R_{u\a u\b} . 
	\label{pp_simpl}
\ee
The Riemann type is thus II iff $g_{\a\b}$ is not flat, otherwise it becomes III(a). The latter case (which, by~\eqref{Lov01}, requires $c_0=0$) describes a subset of the VSI metrics, discussed in section~\ref{subsec_flat_base} below. Here we can thus focus on the generic (Riemann type II) case, for which the base manifold is restricted by (cf. \eqref{Lov01}, \eqref{Lovij})
\be
 \sum_{k=0}^{[(n-2)/2]} c_k\hat{\cal L}^{(k)}=0 , \qquad \sum_{k=0}^{[(n-3)/2]} c_k\hat G^{i(k)}_j=0 . \label{Lovijpp} 
\ee
This implies that the spatial metric $g_{\a\b}$ must itself be a solutions of the Lovelock equations in $(n-2)$ dimensions. Recall that in the special case of Einstein's theory, eqs.~\eqref{Lovijpp} mean that $g_{\a\b}$ must be Ricci-flat \cite{Brinkmann25}, but this is not generically the case in Lovelock's theory. 

For example, one may consider as a base space an {\em isotropy-irreducible homogeneous space}, for which necessarily $\hat G^{i(k)}_j=\a_k\delta^i_j$ (with $\a_0=-\frac{1}{2}$) \cite{Bleecker79}. In this case, eqs.~\eqref{Lovijpp} reduce to two algebraic constraints, i.e.,
\be
 \sum_{k=0}^{[(n-2)/2]} c_k\frac{\a_k}{2k-n+2}=0 ,  \qquad \sum_{k=0}^{[(n-3)/2]} c_k\a_k=0 , 
\ee
while \eqref{Lov1i}, \eqref{Lov11} become
\beqn
 & & \sum_{k=0}^{[(n-3)/2]} c_{k+1}(k+1)\left[2\a_kR_{1i}+\frac{k}{2^k}{\cal R}_p^{js}\delta^{p i \, j_1 i_2 j_2\ldots i_{k}j_{k}}_{s l_1 s_1 l_2 s_2\ldots l_{k}s_{k}}\hat R_{j j_1}^{l_1 s_1} \hat R_{i_2 j_2}^{l_2 s_2}\ldots \hat R_{i_k j_k}^{l_k s_k} \right]=0 , \\
 & & \sum_{k=0}^{[(n-3)/2]} c_{k+1}(k+1)\left[\a_kR_{11}-\frac{k}{2^{k+1}}\delta^{ijn i_1j_1\ldots i_{k-1}j_{k-1}}_{lsp  \,l_1s_1\ldots l_{k-1}s_{k-1}}{\cal R}_i^{ls}{\cal R}^p_{jn} \hat R_{i_1 j_1}^{l_1 s_1} \ldots \hat R_{i_{k-1} j_{k-1}}^{l_{k-1} s_{k-1}} \right]=0 , 
\eeqn 
with \eqref{R1i}, \eqref{R11}. 

In general, the explicit integration of the remaining field equations \eqref{Lov1i}, \eqref{Lov11} (with \eqref{pp_simpl}), which constrain $W_{\a}$ and $H^{(0)}(u,x)$, will depend on the chosen base space. Let us only observe here that these simplify considerably if one chooses coordinates such that $W_\a=0=H$ (which is always locally permitted \cite{Brinkmann25}, cf. section~\ref{sec_recurrent}), since \eqref{Ruabg} and \eqref{Ruaub} then reduce to 
\be
 R_{u\a\b\gamma}= \frac{1}{2}(g_{\a\b,u||\gamma}-g_{\a\gamma,u||\b}) , \qquad R_{u\a u\b}=-\frac{1}{2}g_{\a\b,uu}+\frac{1}{4}g^{\gamma\delta}g_{\a\gamma,u}g_{\b\delta,u} \qquad (W_\a=0=H) .
\ee
If one further {\em assumes} that (in the same coordinate system) also $g_{\a\b,u}=0$, then $R_{u\a\b\gamma}=0=R_{u\a u\b}$, the Riemann type is D, and  \eqref{Lov1i}, \eqref{Lov11} are identically satisfied.   Examples of \pp waves (also allowing for $W_\a\neq0$ or $H\neq0$) are mentioned in sections~\ref{subsubsec_5_6_example}, \ref{subsubsec_flat_gen}, \ref{subsubsec_const_generic} (footnote~\ref{footn_L=0}).

\section{Base space of constant curvature}

\label{sec_const_curv}

Another interesting and tractable case occurs when the base space is of constant curvature, either zero (section~\ref{subsec_flat_base}) or non-zero (section~\ref{subsec_const_base}). This corresponds, respectively, to having a Minkowski or a Nariai-like ``background''.

\subsection{Flat base space, and VSI solutions}

\label{subsec_flat_base}

The case when $g_{\a\b}$ is a flat metric is of considerable interest, since it includes all solutions of the form~\eqref{Kundt_gen} which are of Riemann type III (cf. section~\ref{sec_recurrent}) and therefore, in particular, which possess the VSI property \cite{Coleyetal04vsi}.\footnote{Recall, however, that not all VSI spacetimes admit a recurrent null vector field \cite{Coleyetal04vsi,Coleyetal06}.} It also includes the case when $\bl$ is a (recurrent) Kerr-Schild vector field.

If $g_{\a\b}$ is flat, then $\hat R_{ijkl}=0$, so that $\hat G^{(k)i}_{\ \ j}=0=\hat{\cal L}^{(k)}$ for $k\ge 1$. This means that the field equations \eqref{Lov01}--\eqref{Lov11} reduce to
\beqn
 & & c_0=0 , \label{flat1} \\
 & & c_1H_{,rr}=0 , \label{flat2} \\
 & & c_1R_{1i}=0 , \label{flat3} \\
 & & c_1R_{11}+2c_2(-2{\cal R}^{ij}_i{\cal R}^l_{lj}+{\cal R}^{jl}_i{\cal R}^i_{jl})=0 . \label{flat4} 
\eeqn

We observe that here eqs.~\eqref{flat1}--\eqref{flat3} take the same form as in Einstein's theory (with the cosmological constant necessarily vanishing due to $c_0=0$), while \eqref{flat4} contains also the Gauss-Bonnet term, but not terms of higher order.

\subsubsection{Generic Lovelock theory ($c_1\neq0$)}

\label{subsubsec_flat_gen}

When $c_1\neq0$, \eqref{flat2} implies  
\be
 H(u,r,x)=rH^{(1)}(u,x)+H^{(0)}(u,x) , 
 \label{H_lin}
\ee
while \eqref{flat3} gives $R_{1i}=0$.
Without loss of generality, we can choose coordinates such that 
\be
 g_{\a\b}=\delta_{\a\b} .
\label{g_flat}
\ee 

Then, eq.~\eqref{flat3} with \eqref{R1i}, \eqref{H_lin} determines $H^{(1)}$ (as in \cite{Coleyetal06}) via
\be
 H^{(1)}_{,\a}=W^{\ \ \ \ \ \b}_{[\a||\b]} , \label{H1} 
\ee
such that $\Delta H^{(1)}=0$, while \eqref{flat4} with \eqref{R11} and \eqref{Ruabg} gives $H^{(0)}$ as a solution of (in this case, covariant derivatives in the base space reduce to ordinary derivatives) 
\beqn
 & & c_1\left(\Delta H^{(0)}-H^{(1)}W^{\a}_{\ \ ||\a}-2W^{\a}H^{(1)}_{,\a}-W^{[\a||\b]}W_{[\a||\b]}-W^{\ \ \ ||\a}_{\a,u}\right)  \nonumber \\
 & & \qquad =2c_2\left(-2W_{[\a||\b]}^{\ \ \ \ \ \a}W^{[\gamma||\b]}_{\ \ \ \ \ \gamma}+W_{[\a||\b]\gamma}W^{[\a||\b]\gamma}\right) . \label{H0}
\eeqn

Generically, the Weyl type is III and the Ricci type is N, and it can become more special for particular choices of the metric functions \cite{Coleyetal06} (however, non-flat, conformally flat solutions are not possible, since \eqref{flat3} and \eqref{flat4} imply that the full Riemann tensor vanishes if the Weyl tensor does). This implies that the metric belongs to the VSI class. The spacetime is a \pp wave when $H^{(1)}=0$ (so that ${\cal R}_i=0$). As an example, one can take \eqref{Kundt_gen} with \eqref{H_lin}, \eqref{g_flat} and 
\beqn
 & & W_\a=(ax_3^2+bx_4^2)\delta_{\a,2} , \qquad H^{(1)}=(a+b)x_2 , \label{ex_g_flat1} \\
 & & H^{(0)}=\frac{a}{6}(2a+b)x_3^4+\frac{b}{6}(a+2b)x_4^4+2\left(c-ab\frac{c_2}{c_1}\right)x_3^2-2\left(c+ab\frac{c_2}{c_1}\right)x_4^2 , \label{ex_g_flat2}
\eeqn
where $a,b,c$ are arbitrary functions of $u$ (giving rise to a \pp wave iff $a+b=0$).

In the special case of Ricci-flat metrics of Weyl type III, eq.~\eqref{flat4} reduces to $c_2(-2{\cal R}^{ij}_i{\cal R}^l_{lj}+{\cal R}^{jl}_i{\cal R}^i_{jl})=0$ (cf. also Proposition~\ref{prop_RiemIII}). By theorem~1.4 of \cite{HerPraPra14}, this implies (except for special theories with $c_2=0$) that these Ricci-flat, Weyl type III Lovelock solutions not only solve Einstein and Lovelock gravity, but are in fact universal spacetimes (these cannot be \pp waves, unless the Weyl type degenerates to N \cite{MalPra11prd,HerPraPra14}). An example is given by \eqref{ex_g_flat1}, \eqref{ex_g_flat2}, with $ab=0$ (see section~6.2.2 of \cite{HerPraPra14} for different examples).

When the Weyl type is N, necessarily ${\cal  R}^{ij}_l=0$, so that, by \eqref{flat4}, the metric must be Ricci-flat (cf. also Proposition~\ref{prop_RiemN}). Furthermore, it must be a \pp wave (see section~\ref{sec_recurrent}). One can thus set $H^{(1)}=0$, $W_\a$ must take the form given in \cite{Coleyetal06} ($W_\a=0$ being a special case thereof), and the r.h.s. of \eqref{H0} is identically zero. An example is given by \eqref{ex_g_flat1}, \eqref{ex_g_flat2}, with $a=0=b$. Let us observe that Ricci-flat type N \pp waves are another instance of universal spacetimes \cite{HerPraPra14} (see also the earlier results \cite{Guven87,AmaKli89,HorSte90} in the special case $W_\a=0$).

\subsubsection{Lovelock theories with $c_0=0=c_1$}

\label{subsubsec_flat_deg}

This is clearly a very special case (including, in particular, pure Gauss-Bonnet gravity -- cf. section~\ref{subsubsec_special2_lower} in the case $n=5,6$) since the field equations reduce to the single equation
\be
  c_2(-2{\cal R}^{ij}_i{\cal R}_l^{lm}+{\cal R}^{jl}_i{\cal R}^i_{jl})=0 . 
	\label{flat_special}
\ee

The function $H$ does not enter ${\cal R}^{ij}_l$ and is thus arbitrary. The Riemann type is II (or D) as long as $H_{,rr}\neq0$. Without loss of generality, we can choose coordinates such that $g_{\a\b}=\delta_{\a\b}$. Eq.~\eqref{flat_special} thus takes the form \eqref{H0} with $c_1=0$. Any metric~\eqref{Kundt_gen} with $g_{\a\b}=\delta_{\a\b}$ and $W_{\a}$ as in \eqref{ex_g_flat1} with $ab=0$ is clearly a solution. More generally, one can take any metric~\eqref{Kundt_gen} (with $g_{\a\b}=\delta_{\a\b}$) of Riemann type N (a subset of the metrics of \cite{Coleyetal06}), or of Riemann type III and satisfying \eqref{flat_special} (some examples are mentioned in \cite{MalPra11prd}), and modify the function $H$ arbitrarily, thus obtaining a new (in general non-isometric) solution. Such metrics are VSI iff $H_{,rr}=0$.

For the subset of theories for which also $c_2=0$ (non-trivial for $n\ge7$), {\em any} metric~\eqref{Kundt_gen} with a flat $g_{\a\b}$ obviously satisfies the field equations.

\subsection{Base space of non-zero constant curvature}

\label{subsec_const_base}

Here we consider the case when
\be
 \hat R^{ij}_{lp}=\lambda\delta^{ij}_{lp} ,
 \label{Riem_const}
\ee
where $\lambda$ is a constant,\footnote{We note that $\lambda$ cannot depend on the $x$ thanks to the Bianchi identity in the base space, while $\lambda_{,u}\neq0$ is ruled out by \eqref{const1} below.} which implies (using \eqref{Lagr}, \eqref{fieldeqns})
\be
 \hat{\cal L}^{(k)}=\frac{(n-2)!}{(n-2k-2)!}\lambda^k ,  \qquad \hat G^{i(k)}_j=-\frac{\lambda^k}{2}\frac{(n-3)!}{(n-2k-3)!}\delta^i_j , 
\ee
where it is understood that $k\le[(n-2)/2]$ in the first equation and $k\le[(n-3)/2]$ in the second one (the above quantities being zero otherwise). 
Using these and defining 
\be
 P(\lambda)\equiv \sum_{k=0}^{[(n-1)/2]} c_k\frac{\lambda^k}{(n-2k-1)!} , \qquad Q(\lambda)\equiv (n-1)P(\lambda)-2\lambda P'(\lambda) ,
\ee
where $P'$ is the derivative of $P$ w.r.t. $\lambda$, the field equations \eqref{Lov01}--\eqref{Lov11} reduce to
\beqn
 & & Q(\lambda)=0 , \label{const1} \\
 & & (n-2)Q(\lambda)-2\lambda Q'(\lambda)+2H_{,rr}P'(\lambda)=0 , \label{const2} \\
 & & (n-3)P'(\lambda)R_{1i}-2\lambda P''(\lambda){\cal R}_{ij}^j=0 , \label{const3} \\
 & & (n-3)(n-4)P'(\lambda)R_{11}+P''(\lambda)(-2{\cal R}^{ij}_i{\cal R}^l_{lj}+{\cal R}^{jl}_i{\cal R}^i_{jl})=0 . \label{const4} 
\eeqn

The polynomial eq.~\eqref{const1} generically possesses $[(n-2)/2]$ solutions, determining $\lambda$ in terms of the $c_k$. We assume that at least one of such solutions, say $\bar\lambda$, is real (otherwise the Lovelock equations do not admit solutions in the present subclass of spacetimes) and non-zero. Then one has to solve \eqref{const2}--\eqref{const4} for this particular value $\bar\lambda\neq0$.

\subsubsection{Generic Lovelock theory ($P'(\bar\lambda)\neq0$)}

\label{subsubsec_const_generic}

Generically, $P'(\bar\lambda)\neq0$ (cf. section~\ref{subsubsec_const_deg}). With \eqref{const1}, eq.~\eqref{const2} means that 
\be
 H_{,rr}=\bar\lambda\frac{Q'(\bar\lambda)}{P'(\bar\lambda)}\equiv \frac{2\Lambda_0}{n-2} 
 \label{Hrr}
\ee
is a (generically non-zero) constant, which gives\footnote{We have defined the constant $\Lambda_0$ such that it reduces to the usual cosmological constant in the limit of Einstein's gravity, with the normalization $R=2n\Lambda_0/(n-2)$.}
\be
 H(u,r,x)=\frac{\Lambda_0}{n-2}r^2+rH^{(1)}(u,x)+H^{(0)}(u,x) . \label{H_quad}
\ee

Since $g_{\a\b}$ is a metric of constant curvature, with no loss of generality we can choose coordinates such that (cf., e.g., section~4 of \cite{ColHerPel06})
\be
 g_{\a\b,u}=0 .
\ee
With this, from \eqref{const3} with \eqref{Rijk}, \eqref{Ruabg}, \eqref{R1i} and \eqref{H_quad} one obtains 
\be
 H^{(1)}_{,\a}=\frac{Q'(\bar\lambda)}{P'(\bar\lambda)}\left(\bar\lambda W_\a+\frac{1}{n-3}W^{\ \ \ \ \ \b}_{[\a||\b]}\right) , \label{H1_const} 
\ee
such that $\Delta H^{(1)}=\bar\lambda\frac{Q'(\bar\lambda)}{P'(\bar\lambda)}W^{\a}_{\ \ ||\a}$, 
 while \eqref{const4} with \eqref{R11} gives $H^{(0)}$ as a solution of
\beqn
  & & \Delta H^{(0)}-H^{(1)}W^{\a}_{\ \ ||\a}-2W^{\a}H^{(1)}_{,\a}-W^{[\a||\b]}W_{[\a||\b]}-W^{\ \ \ ||\a}_{\a,u} 		+\frac{2\Lambda_0}{n-2}W^\a W_\a \nonumber \\
  & & \qquad =\frac{P''(\bar\lambda)}{(n-3)(n-4)P'(\bar\lambda)}\left(-2W_{[\a||\b]}^{\ \ \ \ \ \a}W^{[\gamma||\b]}_{\ \ \ \ \ \gamma}+W_{[\a||\b]\gamma}W^{[\a||\b]\gamma}\right) . 	\label{H0_const}
\eeqn

The simplest possible solution is obtained for $H^{(1)}=H^{(0)}=W_\a=0$, giving rise to a direct product spacetime $M_2\times\Sigma_{n-2}$ of two spaces of constant curvature (the signs of the two curvatures being those of $\Lambda_0$ and $\bar\lambda$, respectively). This is a Nariai-type geometry already considered in \cite{MaeWilRay11} (and earlier in \cite{Lorenz-Petzold87} in the Gauss-Bonnet case), of (aligned) Ricci and Weyl type D (it becomes conformally flat iff $2\Lambda_0=-(n-2)\bar\lambda$ $\Leftrightarrow Q'(\bar\lambda)=-P'(\bar\lambda)$, which is not possible in Einstein's theory -- cf.~\cite{Ficken39} and the review \cite{OrtPraPra13rev}).

More general solutions with non-zero functions $H^{(1)}$, $H^{(0)}$ or $W_\a$ generically possess also Ricci and Weyl components of negative b.w. and represent gravitational waves propagating in the Nariai-type background. All the curvature scalar invariants, however, are independent of $H^{(1)}$, $H^{(0)}$ and $W_\a$ \cite{ColHerPel10}, so that all such spacetimes are CSI. These metrics cannot generically be \pp waves -- this occurs iff $\Lambda_0=0$ ($\Leftrightarrow Q'(\bar\lambda)=0$), which singles out a class of degenerate theories for which $\bar\lambda$ is (at least) a {\em double} non-zero root of $Q(\lambda)$, corresponding to a doubly degenerate vacuum (not permitted in Einstein's theory).\footnote{An example is given by a quadratic theory with $(n-2)(n-3)c_1^2=4(n-4)(n-5)c_0c_2$, for which $2(n-4)(n-5)\bar\lambda=-c_1/c_2$ (the case $n=5$ gives rise to pure Gauss-Bonnet theory and is special in the sense that $Q(\lambda)=0=Q'(\lambda)$ identically, so that $\bar\lambda$ remains arbitrary -- cf. also \cite{DadPon13}). For $n=6$, this defines the critical point found in \cite{KasSen15}.\label{footn_L=0}}

The above Lovelock solutions do not solve Einstein's theory, generically. However, it can be seen from \eqref{const1}--\eqref{const4} (with \eqref{Riem_const}, \eqref{Hrr}, \eqref{R01}) that such spacetimes are Einstein precisely when $P''(\bar\lambda)=0$ ($\Leftrightarrow 2\Lambda_0=(n-2)(n-3)\bar\lambda$). This condition thus defines a special class of Lovelock theories for which (when restricted to spacetimes admitting a recurrent null vector field with a base space of constant curvature) any solution  is necessarily an Einstein spacetime.
Note that this is not possible in Gauss-Bonnet gravity (and thus requires $n>6$), since in that case $c_2\neq0$ implies $P''\neq0$, independently of $\lambda$.\footnote{An example is given by a cubic theory ($n>6$) with 
$27(n-5)^2(n-6)^2c_0c_3^2+2(n-2)(n-3)(n-4)^2c_2^2=9(n-2)(n-3)(n-5)(n-6)c_1c_2c_3$, for which $3(n-5)(n-6)\bar\lambda=-c_2/c_3$.} 

In the limit $\bar\lambda=0$, one recovers the results of section~\ref{subsubsec_flat_gen}.

\subsubsection{Degenerate Lovelock theories ($P'(\bar\lambda)=0$)}

\label{subsubsec_const_deg}

If $P'(\bar\lambda)=0$ (which gives $Q'(\bar\lambda)=-2\bar\lambda P''(\bar\lambda)$), eqs.~\eqref{const1} and \eqref{const2} imply that also $P(\bar\lambda)=0=P''(\bar\lambda)$, i.e., $\bar\lambda$ is a (at least) {\em triple} root of $P(\lambda)$. This can occur only for degenerate Lovelock theories (with $n\ge7$). In this case, the field equations \eqref{const1}--\eqref{const4} are identically satisfied, and {\em any} metric \eqref{Kundt_gen} with \eqref{Riem_const} is a solution (including the case $\bar\lambda=0$, cf. section~\ref{subsubsec_flat_deg}). For the special case of direct products $M_2\times\Sigma_{n-2}$ (with $M_2$ arbitrary and $\Sigma_{n-2}$ of constant curvature) this was noticed already in \cite{Wheeler86} (see also \cite{MaeWilRay11}). Since $H$ is arbitrary, these metrics are not CSI, in general.

\section{Base space as a direct product of two spaces of constant curvature}

\label{sec_products}

Here we consider the case when the base space is the direct product of two spaces of constant curvature and respective dimensions $n_1$ and $n_2$ (with $n_1+n_2=n-2$). Let us define indices $A,B,C,\ldots$ and $I,J,K,\ldots$ in these two spaces. Well-known properties of direct product spaces \cite{Ficken39} imply that the Riemann tensor of the base space has only non-mixed components, namely
\be
 \hat R^{AB}_{CD}=\lambda_1\delta^{AB}_{CD} , \qquad \hat R^{IJ}_{KL}=\lambda_2\delta^{IJ}_{KL} , 
	\label{riem_prod}
\ee
where $\lambda_1$ and $\lambda_2$ are constants (not both vanishing). Defining 
\beqn
 & & P_1(\lambda_1,\lambda_2)\equiv \sum_{q=0}^{[n_2/2]}\ \ \sum_{k=q}^{q+[(n_1+1)/2]}c_k\binom{k}{q}\frac{\lambda_1^p\lambda_2^{q}}{(n_1-2p+1)!(n_2-2q)!} , \qquad p\equiv k-q  , \\
 & & P_2(\lambda_1,\lambda_2)\equiv \sum_{q=0}^{[n_1/2]}\ \ \sum_{k=q}^{q+[(n_2+1)/2]}c_k\binom{k}{q}\frac{\lambda_1^{q}\lambda_1^p}{(n_1-2q)!(n_2-2p+1)!} , \\
 & & Q(\lambda_1,\lambda_2)\equiv (n_1+1)P_1(\lambda_1,\lambda_2)-2\lambda_1P_{1,\lambda_1}(\lambda_1,\lambda_2)=(n_2+1)P_2(\lambda_1,\lambda_2)-2\lambda_2P_{2,\lambda_2}(\lambda_1,\lambda_2) , \label{Q_prod}
\eeqn
thanks to~\eqref{riem_prod} the field equations~\eqref{Lov01} and \eqref{Lovij} take the form (the latter splitting into $A$- and $I$-components)
\beqn
 & & Q(\lambda_1,\lambda_2)=0 , \label{Lov01_prod} \\
 & & n_1Q(\lambda_1,\lambda_2)-2\lambda_1 Q_{,\lambda_1}(\lambda_1,\lambda_2)+2H_{,rr}P_{1,\lambda_1}(\lambda_1,\lambda_2)=0 , \label{LovAB} \\
 & & n_2Q(\lambda_1,\lambda_2)-2\lambda_2 Q_{,\lambda_2}(\lambda_1,\lambda_2)+2H_{,rr}P_{2,\lambda_2}(\lambda_1,\lambda_2)=0 . \label{LovIJ} 
\eeqn

From now on, in the line-element~\eqref{Kundt_gen} we make the simplifying assumption
\be
 g_{\a\b,u}=0 , \qquad W_\a=0 ,
\ee
so that ${\cal R}_{ijk}=0$. The remaining field equations~\eqref{Lov1i} and \eqref{Lov11} can now be written as
\beqn
 & & P_{1,\lambda_1}{\cal R}_A=0 , \qquad P_{2,\lambda_2}{\cal R}_I=0 , \label{prod_-1} \\
 & & n_2P_{1,\lambda_1}{\cal R}^A_A+n_1P_{2,\lambda_2}{\cal R}^I_I=0 . \label{prod_-2}
\eeqn

The discussion now parallels that of section~\ref{subsec_const_base}. Generically, eq.~\eqref{LovAB} (or \eqref{LovIJ}) implies that $H$ is of the form \eqref{H_quad} and fixes the constant $\Lambda_0$. 
Eqs.~\eqref{prod_-1} and \eqref{prod_-2} then read
\beqn
 & & P_{1,\lambda_1}H^{(1)}_{,A}=0 , \qquad P_{2,\lambda_2}H^{(1)}_{,I}=0 , \label{prod_-1_gen} \\
 & & n_2P_{1,\lambda_1}\Delta_1 H^{(0)}+n_1P_{2,\lambda_2}\Delta_2 H^{(0)}=0 , \label{prod_-2_gen}
\eeqn
where $\Delta_1$ and $\Delta_2$ are the Laplace operators in the geometries of the two factor-subspaces of the base metric. We observe that if both $P_{1,\lambda_1}\neq0\neq P_{2,\lambda_2}$, then \eqref{prod_-1} implies that $H^{(1)}=H^{(1)}(u)$, which is thus removable. An explicit example in six dimensions (with $n_1=2=n_2$) was mentioned in footnote~\ref{footn_ex_5D_6D}.

For special theories such that $P_{1,\lambda_1}=0=P_{2,\lambda_2}$, eqs.~\eqref{Q_prod}--\eqref{LovIJ} further require $P_1(\lambda_1,\lambda_2)=0=P_2(\lambda_1,\lambda_2)$ and $\lambda_1 Q_{,\lambda_1}=0=\lambda_2 Q_{,\lambda_2}$, but do not determine $H_{,rr}$. Furthermore, eqs.~\eqref{prod_-1} and \eqref{prod_-2} are identically satisfied and the function $H(u,r,x)$ is arbitrary -- this is a special instance of the case discussed in section~\ref{subsec_Hspecial}.

\section*{Acknowledgments}

I am grateful to Sourya Ray for many stimulating discussions and for helpful comments. 
This work has been supported by research plan RVO: 67985840 and research grant GA\v CR~13-10042S.   
The author's stay at Instituto de Ciencias F\'{\i}sicas y Matem\'aticas, Universidad Austral de Chile has been supported by CONICYT PAI ATRACCI{\'O}N DE CAPITAL HUMANO AVANZADO DEL EXTRANJERO Folio 80150028.

\renewcommand{\thesection}{\Alph{section}}
\setcounter{section}{0}

\renewcommand{\theequation}{{\thesection}\arabic{equation}}

\section{A few general results on Lovelock vacua}
\setcounter{equation}{0}

\label{app_general}

In the following, we discuss a few results on exact vacuum solution of Lovelock gravity~\eqref{fieldeqns}. These are {\em not} restricted to spacetimes admitting a recurrent null vector field and thus apply in a more general context.

\subsection{Riemann type N}

An observation of \cite{ReaTanBen14} can be slightly rephrased as (see also \cite{PraPra08} for the special case of Gauss-Bonnet gravity) 
\begin{proposition}[\cite{ReaTanBen14}]
\label{prop_RiemN}
 A spacetime of Riemann type N is a vacuum solution of Lovelock gravity iff $c_0=0=c_1R_{11}$.
\end{proposition}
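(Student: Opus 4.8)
The plan is to show that, when the Riemann tensor is of type N, the Lovelock operator $G^a_c=\sum_k c_kG^{a(k)}_c$ reduces to just its $k=0$ and $k=1$ contributions, and then to identify when the resulting expression vanishes. I would first recall the covariant form of a type-N curvature: with $\bl$ the (multiply) aligned null direction, the only non-zero frame components of the Riemann tensor are $R_{1i1j}\equiv{\cal R}_{ij}$ (symmetric), equivalently $R_{abcd}=4{\cal R}_{ij}\ell_{[a}m^{(i)}_{b]}\ell_{[c}m^{(j)}_{d]}$, so that in particular \emph{every} antisymmetrised index pair of $R_{abcd}$ carries a factor of $\ell$. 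Contracting gives $R_{ab}={\cal R}_{ii}\ell_a\ell_b=R_{11}\ell_a\ell_b$ and $R=g^{ab}R_{ab}=0$, whence $G^{a(1)}_c=R^a_c-\frac{1}{2}Rg^a_c=R_{11}\,\ell^a\ell_c$, while $G^{a(0)}_c=-\frac{1}{2}\delta^a_c$ as recalled in the text.

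The core of the argument is to prove that $G^{a(k)}_c=0$ for every $k\ge2$. I would substitute the type-N form of the Riemann tensor into
\[
 G^{a(k)}_c=-\frac{1}{2^{k+1}}\delta^{aa_1b_1\ldots a_kb_k}_{cc_1d_1\ldots c_kd_k}R^{c_1d_1}_{a_1b_1}\ldots R^{c_kd_k}_{a_kb_k},
\]
and note that each factor $R^{c_id_i}_{a_ib_i}$ then carries a covariant $\ell$ in its antisymmetrised lower pair $[a_ib_i]$ (and a contravariant $\ell$ in $[c_id_i]$). For $k\ge2$ this places at least two copies of $\ell_a$ into slots over which the generalised Kronecker delta antisymmetrises; since the relevant contraction is symmetric under interchanging those two slots (here one uses the symmetry of ${\cal R}_{ij}$, exactly as in the first Bianchi identity for type N), the antisymmetrisation annihilates it, so $G^{a(k)}_c\equiv0$. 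This is precisely what distinguishes type N from type III, where only one of the two pairs of each Riemann factor need carry an $\ell$ and the $k=2$ term survives.

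Putting the pieces together, the vacuum equation becomes $G^a_c=-\frac{1}{2}c_0\,\delta^a_c+c_1R_{11}\,\ell^a\ell_c=0$. Since $\delta^a_c$ has trace $n\neq0$ whereas $\ell^a\ell_c$ is traceless (and of rank one), the two tensors are linearly independent, so their combination vanishes iff $c_0=0$ and $c_1R_{11}=0$; conversely, if $c_0=0=c_1R_{11}$ then $G^a_c$ vanishes term by term. This yields the asserted equivalence.

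I expect the only step requiring genuine care to be the vanishing of $G^{a(k)}_c$ for $k\ge2$: one has to keep track of the antisymmetry already built into each Riemann factor and into the Kronecker delta and verify that two factors of $\ell$ really do end up in antisymmetrised positions. The factorised form $R_{abcd}=4{\cal R}_{ij}\ell_{[a}m^{(i)}_{b]}\ell_{[c}m^{(j)}_{d]}$ makes this transparent, and since it is anyway a standard structural property of type-N Riemann tensors, no serious difficulty is anticipated.
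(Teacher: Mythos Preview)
Your proof is correct and follows essentially the same route as the paper's: both rest on the observation that a type~N Riemann tensor has only boost-weight $-2$ components, so every $G^{a(k)}_c$ with $k\ge2$ vanishes identically and only the cosmological and Einstein contributions remain. Your explicit factorisation $R_{abcd}=4{\cal R}_{ij}\ell_{[a}m^{(i)}_{b]}\ell_{[c}m^{(j)}_{d]}$ and the ``two $\ell$'s in antisymmetrised slots'' argument is just a concrete realisation of the boost-weight counting the paper invokes; note, incidentally, that the symmetry of ${\cal R}_{ij}$ is not actually needed for that step---the vanishing follows simply from $\ell^{c_1}\ell^{c_2}$ being symmetric while the generalised Kronecker delta antisymmetrises over those slots.
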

\begin{proof} 
 A Riemann tensor of type N possesses only components of b.w. $-2$ (in particular, $R=0$, and $R_{11}$ is the only non-zero Ricci component). This means that only the cosmological and the Einstein terms survive in the vacuum Lovelock equations (i.e., terms quadratic and of higher order in the curvature tensor vanish identically), from which the result follows immediately. 
\end{proof} 

The cosmological constant is thus necessarily zero and, generically (i.e., $c_1\neq0$), such solutions are Ricci-flat and of Weyl type N (see, e.g., \cite{OrtPraPra13rev} and references therein for some examples); conversely, any Ricci-flat spacetime of Weyl type N solves Lovelock gravity with $c_0=0$ (this was noticed in \cite{BouDes85,GibRub86,GleDot05} in special cases). For special Lovelock theories with $c_0=0=c_1$ (such as pure Gauss-Bonnet gravity), any metric of Riemann type N is a solution. 

\subsection{Riemann type III}

Similar considerations lead to the following generalization for type III (for which we omit a similar, straightforward proof, based on~\eqref{fieldeqns} with the observation that $G^a_c$ can now contain at most quadratic terms) -- the notation~\eqref{Rcomp_1} for Riemann components of b.w.~$-1$ is used.
\begin{proposition}
\label{prop_RiemIII}
 A spacetime of Riemann type III is a vacuum solution of Lovelock gravity iff $c_0=0$ and 
 \beqn
  & & c_1R_{1i}=0 , \label{RIII_1} \\
  & & c_1R_{11}+2c_2(-2{\cal R}^{ij}_i{\cal R}^l_{lj}+{\cal R}^{jl}_i{\cal R}^i_{jl})=0 . \label{RIII_2} 
 \eeqn
\end{proposition}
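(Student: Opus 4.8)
The plan is to follow exactly the same strategy as in the proof of Proposition~\ref{prop_RiemN}, exploiting the boost-weight (b.w.) decomposition of the Riemann tensor. For a spacetime of Riemann type III, the frame Riemann components have b.w. $0$ all vanishing; the only non-zero components have b.w. $-1$ or $-2$. First I would record which Ricci components survive: contracting a type III Riemann tensor, the only non-zero Ricci components are $R_{1i}$ (b.w.~$-1$, built from $\mathcal R_{i}$ and $\mathcal R_{ijk}$ via $R_{1i}=-\mathcal R_i+\mathcal R_{jij}$) and $R_{11}$ (b.w.~$-2$), while the Ricci scalar $R$ vanishes. In particular $R^a_c$ is nilpotent with only negative-b.w. entries.

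Next I would analyze each term $G^{a(k)}_c$ in~\eqref{fieldeqns}. The tensor $G^{a(k)}_c$ is built from $k$ copies of the Riemann tensor (antisymmetrized through the generalized Kronecker delta). Assigning b.w., a product of $k$ Riemann factors each of which has \emph{strictly negative} b.w. in every non-zero component produces a tensor all of whose non-zero components have b.w. $\le -k$. Since the b.w. of any component of $G^a_c$ is bounded below by $-2$ (it is a rank-2 tensor with entries of b.w. between $+2$ and $-2$, and here actually $\le 0$ because Riemann is aligned), every term with $k\ge 3$ must vanish identically. Thus only $k=0$ (the cosmological term $G^{a(0)}_c=-\tfrac12\delta^a_c$, b.w.~$0$), $k=1$ (the Einstein tensor $R^a_c-\tfrac12 Rg^a_c = R^a_c$ here, b.w.~$\le -1$), and $k=2$ (the Gauss--Bonnet Lanczos--Lovelock tensor, whose surviving components are b.w.~$-2$) can contribute. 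This reduces~\eqref{fieldeqns} to $-\tfrac12 c_0\delta^a_c + c_1 R^a_c + c_2 G^{a(2)}_c = 0$.

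Then I would read off the equations b.w. by b.w. The b.w.~$0$ part forces $c_0=0$ (the $\delta^a_c$ term is the only b.w.~$0$ contribution, since $R^a_c$ and $G^{a(2)}_c$ have only negative-b.w. components in type III). The b.w.~$-1$ part receives a contribution only from $c_1 R^a_c$ — the Gauss--Bonnet term $G^{a(2)}_c$, being quadratic in a b.w.~$\le -1$ tensor, has no b.w.~$-1$ component — giving $c_1 R_{1i}=0$, which is~\eqref{RIII_1}. Finally the b.w.~$-2$ part gives $c_1 R_{11} + c_2 (G^{(2)})_{11}=0$; here $(G^{(2)})_{11}$ must be computed explicitly from $G^{a(2)}_c=-\tfrac{1}{8}\delta^{aa_1b_1a_2b_2}_{cc_1d_1c_2d_2}R^{c_1d_1}_{a_1b_1}R^{c_2d_2}_{a_2b_2}$ restricted to the b.w.~$-2$ output, where only the products of two b.w.~$-1$ Riemann components $\mathcal R_{ijk}=R_{1ijk}$ survive, yielding (after expanding the delta and using that all b.w.~$0$ Riemann components vanish) the combination $2(-2\mathcal R^{ij}_i\mathcal R^l_{lj}+\mathcal R^{jl}_i\mathcal R^i_{jl})$, which is~\eqref{RIII_2}. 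Conversely, if $c_0=0$ together with~\eqref{RIII_1}--\eqref{RIII_2}, the reduced field equations hold identically, establishing the ``iff''.

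The main obstacle is the explicit evaluation of the b.w.~$-2$ piece of the Gauss--Bonnet tensor $G^{a(2)}_c$ in a type III frame and checking that it collapses precisely to $2(-2\mathcal R^{ij}_i\mathcal R^l_{lj}+\mathcal R^{jl}_i\mathcal R^i_{jl})$: this is a finite but somewhat intricate combinatorial expansion of the rank-$5$ generalized Kronecker delta, made tractable by the vanishing of all b.w.~$\ge 0$ Riemann components (so only the $\mathcal R_{ijk}\mathcal R_{i'j'k'}$ cross-terms contribute) and by the identity~\eqref{ident_delta}. Everything else is a routine b.w.-counting argument parallel to Proposition~\ref{prop_RiemN}, which is why the paper is content to omit it; I would present the b.w. bookkeeping in full and merely indicate the delta-expansion for the last step.
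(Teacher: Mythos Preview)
Your proposal is correct and follows essentially the same approach as the paper, which simply notes that the proof is a straightforward extension of Proposition~\ref{prop_RiemN} based on the observation that $G^a_c$ can contain at most quadratic terms in the Riemann tensor. Your b.w.-counting argument makes this explicit, and your remark that only the $\mathcal R_{ijk}$ (not $\mathcal R_i$) components enter the b.w.~$-2$ piece of $G^{a(2)}_c$ is also correct, since the antisymmetry of the generalized Kronecker delta in $\delta^{0 a_1 b_1 a_2 b_2}_{1 c_1 d_1 c_2 d_2}$ excludes the index values needed to produce $R_{101i}$.
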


Again, the cosmological constant is necessarily zero. Generically, the Ricci type must be N (which means that ${\cal R}_{ijk}=C_{1ijk}$ in \eqref{RIII_2}). For Ricci-flat metrics, eq.~\eqref{RIII_2} reduces to a condition found in \cite{MalPra11prd} for Einstein spacetimes in quadratic gravity. For special theories with $c_0=0=c_1$, the only surviving equation takes the form $c_2(-2{\cal R}^{ij}_i{\cal R}^l_{lj}+{\cal R}^{jl}_i{\cal R}^i_{jl})=0$, while $R_{1i}$ is unrestricted. For even more special theories with $c_0=c_1=c_2=0$ ($n\ge7$), any metric of Riemann type III is a solution.

From \cite{PodOrt06} with Propositions~\ref{prop_RiemN} and \ref{prop_RiemIII} it follows, for example, that Robinson-Trautman vacua of Riemann type III/N do not exist in theories with $c_1\neq0$.

Recall that metrics of Riemann type III/N include all the (non-flat) VSI spacetimes \cite{Coleyetal04vsi}.

\subsection{Weyl and traceless-Ricci type III (aligned)}

\label{app_RicciIII}

A further generalization enables one to include a cosmological constant as follows (as noticed in section~\ref{sec_recurrent}, spacetimes admitting a recurrent null vector field do not fall in this class, except when $\lambda=0$).
\begin{proposition}
\label{prop_Riem_align}
 A spacetime of (aligned) Weyl and traceless-Ricci type III (i.e., $R^{ab}_{cd}=\lambda\delta^{ab}_{cd}+(\mbox{b.w.}<0)$)
 is a vacuum solution of Lovelock gravity iff 
 \beqn 
  & & P(\lambda)=0 , \label{align_1} \\ 
  & & P'(\lambda)R_{1i}=0 , \label{align_2} \\
  & & (n-3)(n-4)P'(\lambda)R_{11}+P''(\lambda)(-2{\cal R}^{ij}_i{\cal R}^l_{lj}+{\cal R}^{jl}_i{\cal R}^i_{jl})=0 , \label{align_3} 
 \eeqn
where 
\be
 P(\lambda)\equiv \sum_{k=0}^{[(n-1)/2]} c_k\frac{\lambda^k}{(n-2k-1)!} ,
\ee
and $P'$, $P''$ are the derivatives of $P$ w.r.t. $\lambda$.
\end{proposition}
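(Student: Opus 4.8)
The plan is to mimic the structure of the proofs of Propositions~\ref{prop_RiemN} and~\ref{prop_RiemIII}, but now retaining the cosmological (b.w.~0) piece $R^{ab}_{cd}=\lambda\delta^{ab}_{cd}$ on top of the negative-b.w.\ terms. First I would decompose the Riemann tensor of the spacetime as $R^{ab}_{cd}=\lambda\delta^{ab}_{cd}+S^{ab}_{cd}$, where $S$ collects the (aligned) b.w.~$-1$ and b.w.~$-2$ components only, i.e.\ $S$ has the frame components ${\cal R}_i$, ${\cal R}_{ijk}$, ${\cal R}_{ij}$ of~\eqref{Rcomp_1}. The key algebraic fact is that any product of the form $\delta\cdots\delta R\cdots R$ appearing in $G^{a(k)}_c$ of~\eqref{fieldeqns}, when expanded multilinearly in these two pieces, contains \emph{at most two} factors of $S$: a term with three or more $S$'s would carry total boost weight $\le-3$, but the only nonvanishing components of $G^a_c$ have b.w.\ $\ge-2$ (indeed $G^0_0$, $G^i_j$, $G^0_i$, $G^0_1$ have b.w.\ $0,0,-1,-2$ respectively). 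Hence in each $G^{a(k)}_c$ only the terms linear and quadratic in $S$ (plus the pure-$\lambda$ term) survive.

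Next I would evaluate the pure-$\lambda$ contribution: substituting $R^{ab}_{cd}\to\lambda\delta^{ab}_{cd}$ into $G^{a(k)}_c$ and using the contraction identity~\eqref{ident_delta} repeatedly gives $G^{a(k)}_c\big|_{\lambda\text{-part}}=-\tfrac{\lambda^k}{2}\tfrac{(n-1)!}{(n-2k-1)!}\,\delta^a_c$ up to the usual combinatorial constant, so that summing over $k$ with coefficients $c_k$ produces precisely $-\tfrac12(n-1)!\,P(\lambda)\,\delta^a_c$. This is the b.w.~0 part of the field equations and, since it is pure-trace and there is no other b.w.~0 contribution from the $S$-linear or $S$-quadratic terms at the level of $G^0_0=G^i_j$, setting it to zero yields~\eqref{align_1}. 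Then I would isolate the b.w.~$-1$ part: the only way to get b.w.~$-1$ from $\delta\cdots\delta R\cdots R$ is to take exactly one factor linear in the b.w.~$-1$ component ${\cal R}_i$ (equivalently $R_{1i}$, since only $R_{1i}$ among the negative-b.w.\ Ricci components has b.w.~$-1$) and all remaining factors equal to $\lambda\delta$; contracting away the $\lambda\delta$'s via~\eqref{ident_delta} leaves a single monomial $\propto k\lambda^{k-1}R_{1i}$, and summing with $c_k$ gives $P'(\lambda)R_{1i}$ up to a nonzero numerical factor, hence~\eqref{align_2}.

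Finally, for the b.w.~$-2$ part ($G^0_1=0$) there are two contributions: either one factor \emph{quadratic} in $S$ via the b.w.~$-1$ piece (giving the ${\cal R}\cdot{\cal R}$ terms, precisely the combination $-2{\cal R}^{ij}_i{\cal R}^l_{lj}+{\cal R}^{jl}_i{\cal R}^i_{jl}$ that already appeared in~\eqref{RIII_2} and~\eqref{Lov11_lower}, multiplied by $\tfrac12 k(k-1)\lambda^{k-2}$ and hence assembling into $P''(\lambda)$), or one factor \emph{linear} in the b.w.~$-2$ piece $R_{11}$ with the rest $\lambda\delta$ (giving $k\lambda^{k-1}R_{11}$, assembling into $P'(\lambda)R_{11}$ with the dimension-dependent prefactor $(n-3)(n-4)$ that comes out of the $\delta$-contractions). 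Adding these and demanding vanishing gives~\eqref{align_3}; conversely, if~\eqref{align_1}--\eqref{align_3} hold then every component of $G^a_c$ vanishes, which proves the ``if'' direction. I expect the main obstacle to be purely bookkeeping: carefully tracking the combinatorial prefactors from the generalized Kronecker deltas in~\eqref{fieldeqns} under repeated use of~\eqref{ident_delta}, and verifying that the b.w.~$-1$ part of $G^0_1$ and the b.w.~$-2$ part contributed by quadratic-in-${\cal R}_{ijk}$ terms reorganize exactly into the stated scalar combination — but this is essentially the same computation already carried out for the constant-curvature base space in~\eqref{const1}--\eqref{const4}, only now for the full spacetime Riemann tensor rather than the base one, so I would cite that computation and~\cite{ReaTanBen14} to avoid repeating it in full.
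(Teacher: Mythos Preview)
Your proposal is correct and follows exactly the approach the paper takes: substitute $R^{ab}_{cd}=\lambda\delta^{ab}_{cd}+S^{ab}_{cd}$ into~\eqref{fieldeqns}, use boost-weight counting to truncate the multilinear expansion at quadratic order in $S$, and reduce the remaining $\delta$-contractions with~\eqref{ident_delta}. The paper's proof is extremely terse (``simple combinatorics and recalling~\eqref{ident_delta}''), and you have spelled out precisely those combinatorics; your parallel with~\eqref{const1}--\eqref{const4} is also apt, since the structure there is the same with the base-space Riemann replaced by the full one. One small wording slip: at b.w.~$-1$ the relevant piece of $S$ is not just ${\cal R}_i$ but also ${\cal R}_{ijk}$, and it is only after the $\delta$-contractions that these assemble into the Ricci component $R_{1i}=-{\cal R}_i+{\cal R}_{jij}$; your parenthetical ``equivalently $R_{1i}$'' is the right endpoint, but the intermediate identification of the b.w.~$-1$ input with ${\cal R}_i$ alone should be dropped.
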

\begin{proof} 

The above result follows by plugging $R^{ab}_{cd}=\lambda\delta^{ab}_{cd}+(\mbox{b.w.}<0)$ into \eqref{fieldeqns} (with the notation \eqref{Rcomp_1}), after simple combinatorics and recalling~\eqref{ident_delta}. (In passing, we note that \eqref{align_1} implies that $\lambda$ is a constant, so that this need not be assumed.)

\end{proof} 

The polynomial eq.~\eqref{align_1} generically possesses $[(n-1)/2]$ solutions, thus fixing $\lambda$ in terms of the $c_k$  (we assume that at least one, say $\bar\lambda$, is real and non-zero), exactly as in the well-known case of constant curvature spacetimes \cite{BouDes85,Wheeler86}. 
For a {\em non-degenerate} Lovelock theory, eq.~\eqref{align_2} implies $R_{1i}=0$, i.e., the traceless-Ricci type must be N. If the Weyl type is assumed to be also N, then \eqref{align_3} implies $R_{11}=0$, i.e., the spacetime must be Einstein (in agreement with results obtained in \cite{GibRub86,GleDot05} in special cases). Assuming, instead, the spacetime to be Einstein (without restricting the Weyl type), eq.~\eqref{align_3} reduces to 
\be
 P''(\bar\lambda)(-2{\cal R}^{ij}_i{\cal R}^l_{lj}+{\cal R}^{jl}_i{\cal R}^i_{jl})=0 \qquad \mbox{(for Einstein spacetimes)} . 
 \label{align_einst}
\ee
As mentioned above, the same condition was found in \cite{MalPra11prd} for solutions to quadratic gravity. Einstein spacetimes of genuine Weyl type III satisfying \eqref{align_einst} (and thus Lovelock's theory) are known, cf., e.g., \cite{MalPra11prd,OrtPraPra10}.

On the other hand, if the theory is degenerate and $\bar\lambda$ is taken to be (at least) a {\em double} root of $P(\lambda)$, eq.~\eqref{align_2} becomes an identity and \eqref{align_3} reduces again to \eqref{align_einst} -- this may give rise to solutions containing arbitrary functions, cf. one such example in \cite{GleDot05}. If $\bar\lambda$ is (at least) a {\em triple} root ($n\ge7$), then both \eqref{align_2} and \eqref{align_3} are identically satisfied, and {\em any} spacetime of (aligned) Weyl and traceless-Ricci type III is a solution.

%
%
%
%

\begin{thebibliography}{10}

\bibitem{Stephanibook}
H.~Stephani, D.~Kramer, M.~MacCallum, C.~Hoenselaers, and E.~Herlt, {\em Exact
  Solutions of {E}instein's Field Equations}.
\newblock Cambridge University Press, Cambridge, second~ed., 2003.

\bibitem{GriPodbook}
J.~B. Griffiths and J.~Podolsk\'y, {\em Exact Space-Times in {E}instein's
  General Relativity}.
\newblock Cambridge University Press, Cambridge, 2009.

\bibitem{Kundt61}
W.~Kundt, {\it The plane-fronted gravitational waves},  {\em Z. Phys.} {\bf
  163} (1961) 77--86.

\bibitem{PodOrt03}
J.~Podolsk\'y and M.~Ortaggio, {\it Explicit {K}undt type {$II$} and {$N$}
  solutions as gravitational waves in various type {$D$} and {$O$} universes},
  {\em Class. Quantum Grav.} {\bf 20} (2003) 1685--1701.

\bibitem{KunLuc13}
H.~K. Kunduri and J.~Lucietti, {\it Classification of near-horizon geometries
  of extremal black holes},  {\em Living Rev. Rel.} {\bf 16} (2013) 8.

\bibitem{Coleyetal04vsi}
A.~Coley, R.~Milson, V.~Pravda, and A.~Pravdov\'a, {\it Vanishing scalar
  invariant spacetimes in higher dimensions},  {\em Class. Quantum Grav.} {\bf
  21} (2004) 5519--5542.

\bibitem{ColHerPel09a}
A.~Coley, S.~Hervik, and N.~Pelavas, {\it Spacetimes characterized by their
  scalar curvature invariants},  {\em Class. Quantum Grav.} {\bf 26} (2009)
  025013.

\bibitem{Guven87}
R.~G{\"u}ven, {\it Plane waves in effective theories of superstrings},  {\em
  Phys. Lett. {\rm B}} {\bf 191} (1987) 275--281.

\bibitem{AmaKli89}
D.~Amati and C.~Klim\v{c}\'{\i}k, {\it Nonperturbative computation of the
  {W}eyl anomaly for a class of nontrivial backgrounds},  {\em Phys. Lett. {\rm
  B}} {\bf 219} (1989) 443--447.

\bibitem{HorSte90}
G.~T. Horowitz and A.~R. Steif, {\it Spacetime singularities in string theory},
   {\em Phys. Rev. Lett.} {\bf 64} (1990) 260--263.

\bibitem{Coleyetal08}
A.~A. Coley, G.~W. Gibbons, S.~Hervik, and C.~N. Pope, {\it Metrics with
  vanishing quantum corrections},  {\em Class. Quantum Grav.} {\bf 25} (2008)
  145017.

\bibitem{HerPraPra17}
S.~Hervik, V.~Pravda, and A.~Pravdov\'a, {\it Universal spacetimes in four
  dimensions},  {\em JHEP} {\bf 10} (2017) 028.

\bibitem{OrtPraPra13rev}
M.~Ortaggio, V.~Pravda, and A.~Pravdov\'a, {\it Algebraic classification of
  higher dimensional spacetimes based on null alignment},  {\em Class. Quantum
  Grav.} {\bf 30} (2013) 013001.

\bibitem{Walker50}
A.~G. Walker, {\it Canonical form for a {R}iemannian space with a parallel
  field of null planes},  {\em Quart. J. Math. Oxford} {\bf 1} (1950) 69--79.

\bibitem{Brinkmann25}
H.~W. Brinkmann, {\it Einstein spaces which are mapped conformally on each
  other},  {\em Math. Ann.} {\bf 94} (1925) 119--145.

\bibitem{KerGol61_0}
J.~N. Goldberg and R.~P. Kerr, {\it Some applications of the
  infinitesimal-holonomy group to the {P}etrov classification of spacetimes},
  {\em J. Math. Phys.} {\bf 2} (1961) 327--332.

\bibitem{DebCah61}
R.~Debever and M.~Cahen, {\it Sur les espaces-temps, qui admettent un champ de
  vecteurs isotropes parall\`eles},  {\em Bull. Acad. Roy. Belg. Cl. Sci.} {\bf
  47} (1961) 491--514.

\bibitem{GibPop08}
G.~W. Gibbons and C.~N. Pope, {\it Time-dependent multi-centre solutions from
  new metrics with holonomy {S}im$(n-2)$},  {\em Class. Quantum Grav.} {\bf 25}
  (2008) 125015.
	
\bibitem{Lewand92}
J.~Lewandowski, {\it Reduced holonomy group and {E}instein equations with a
  cosmological constant},  {\em Class. Quantum Grav.} {\bf 9} (1992)
  L147--L151.


\bibitem{KerGol61}
R.~P. Kerr and J.~N. Goldberg, {\it {E}instein spaces with four-parameter
  holonomy groups},  {\em J. Math. Phys.} {\bf 2} (1961) 332--336.

\bibitem{LerMcL73}
J.~Leroy and R.~G. McLenaghan, {\it Sur les espace-temps contenant un champ de
  vecteurs isotropes r\'ecurrents},  {\em Bull. Acad. Roy. Belg. Cl. Sci.} {\bf
  59} (1973) 584--610.


\bibitem{Lovelock71}
D.~Lovelock, {\it The {E}instein tensor and its generalizations},  {\em J.
  Math. Phys.} {\bf 12} (1971) 498--501.

\bibitem{GarGir08}
C.~Garraffo and G.~Giribet, {\it The {L}ovelock black holes},  {\em Mod. Phys.
  Lett.~A} {\bf 23} (2008) 1801--1818.

\bibitem{Charmousis09}
C.~Charmousis, {\it Higher order gravity theories and their black hole
  solutions},  in {\em Physics of Black Holes: A Guided Tour}
  (E.~Papantonopoulos, ed.), pp.~299--346.
\newblock Springer, Berlin, Heidelberg, 2009.

\bibitem{PadKot13}
T.~Padmanabhan and D.~Kothawala, {\it {L}anczos-{L}ovelock models of gravity},
  {\em Phys. Rept.} {\bf 531} (2013) 115--171.

\bibitem{Zwiebach85}
B.~Zwiebach, {\it Curvature squared terms and string theories},  {\em Phys.
  Lett. {\rm B}} {\bf 156} (1985) 315--317.

\bibitem{OrtPraPra07}
M.~Ortaggio, V.~Pravda, and A.~Pravdov\'a, {\it Ricci identities in higher
  dimensions},  {\em Class. Quantum Grav.} {\bf 24} (2007) 1657--1664.

\bibitem{Ortaggio09}
M.~Ortaggio, {\it {B}el-{D}ebever criteria for the classification of the {W}eyl
  tensor in higher dimensions},  {\em Class. Quantum Grav.} {\bf 26} (2009)
  195015.

\bibitem{Milsonetal05}
R.~Milson, A.~Coley, V.~Pravda, and A.~Pravdov\'a, {\it Alignment and
  algebraically special tensors in {L}orentzian geometry},  {\em Int. J. Geom.
  Meth. Mod. Phys.} {\bf 2} (2005) 41--61.

\bibitem{ColHerPel06}
A.~Coley, S.~Hervik, and N.~Pelavas, {\it On spacetimes with constant scalar
  invariants},  {\em Class. Quantum Grav.} {\bf 23} (2006) 3053--3074.

\bibitem{HerPraPra14}
S.~Hervik, V.~Pravda, and A.~Pravdov\'a, {\it Type {III} and {N} universal
  spacetimes},  {\em Class. Quantum Grav.} {\bf 31} (2014) 215005.

\bibitem{PodSva13}
J.~Podolsk\'y and R.~\v{S}varc, {\it Explicit algebraic classification of
  {K}undt geometries in any dimension},  {\em Class. Quantum Grav.} {\bf 30}
  (2013) 125007.

\bibitem{Coleyetal06}
A.~Coley, A.~Fuster, S.~Hervik, and N.~Pelavas, {\it Higher dimensional {VSI}
  spacetimes},  {\em Class. Quantum Grav.} {\bf 23} (2006) 7431--7444.

\bibitem{OrtPra16}
M.~Ortaggio and V.~Pravda, {\it Electromagnetic fields with vanishing scalar
  invariants},  {\em Class. Quantum Grav.} {\bf 33} (2016) 115010.

\bibitem{Coleyetal09}
A.~Coley, S.~Hervik, G.~O. Papadopoulos, and N.~Pelavas, {\it Kundt
  spacetimes},  {\em Class. Quantum Grav.} {\bf 26} (2009) 105016.

\bibitem{Wheeler86}
J.~T. Wheeler, {\it Symmetric solutions to the maximally {G}auss-{B}onnet
  extended {E}instein equations},  {\em Nucl. Phys. {\rm B}} {\bf 273} (1986)
  732--748.

	
\bibitem{CriTroZan00}
J.~Cris\'ostomo, R.~Troncoso, and J.~Zanelli, {\it Black hole scan},  {\em
  Phys. Rev. {\rm D}} {\bf 62} (2000) 084013.	

\bibitem{Chamseddine89}
A.~H. Chamseddine, {\it Topological gauge theory of gravity in five and all odd
  dimensions},  {\em Phys. Lett. {\rm B}} {\bf B233} (1989) 291--294.

\bibitem{BanTeiZan94}
M.~Ba{\~n}ados, C.~Teitelboim, and J.~Zanelli, {\it Dimensionally continued
  black holes},  {\em Phys. Rev. {\rm D}} {\bf 49} (1994) 975--986.


\bibitem{HruPod16}
O.~Hru{\v s}ka and J.~Podolsk\'y, {\it Kundt spacetimes in the {G}auss-{B}onnet
  gravity},  in {\em WDS'16 Proceedings of Contributed Papers} (J.~{\v
  S}afr\'ankov\'a and J.~Pavl{\r u}, eds.), pp.~66--71.
\newblock Matfyzpress, Prague, 2016. arXiv:1801.00307 [gr-qc].

\bibitem{Bogdanosetal09}
C.~Bogdanos, C.~Charmousis, B.~Gout\'eraux, and R.~Zegers, {\it
  Einstein-{G}auss-{B}onnet metrics: Black holes, black strings and a staticity
  theorem},  {\em JHEP} {\bf 10} (2009) 037.

\bibitem{Bleecker79}
D.~D. Bleecker, {\it Critical {R}iemannian manifolds},  {\em J. Diff. Geom.}
  {\bf 14} (1979) 599--608.

\bibitem{MalPra11prd}
T.~M\'alek and V.~Pravda, {\it Type {III} and {N} solutions to quadratic
  gravity},  {\em Phys. Rev. {\rm D}} {\bf 84} (2011) 024047.

\bibitem{MaeWilRay11}
H.~Maeda, S.~Willison, and S.~Ray, {\it {L}ovelock black holes with maximally
  symmetric horizons},  {\em Class. Quantum Grav.} {\bf 28} (2011) 165005.

\bibitem{Lorenz-Petzold87}
D.~Lorenz-Petzold, {\it String-generated generalizations of the {N}ariai
  solution},  {\em Prog. Theor. Phys.} {\bf 78} (1987) 969--971.

\bibitem{Ficken39}
F.~A. Ficken, {\it The {R}iemannian and affine differential geometry of
  product-spaces},  {\em Ann. Math.} {\bf 40} (1939) 892--913.

\bibitem{ColHerPel10}
A.~Coley, S.~Hervik, and N.~Pelavas, {\it Lorentzian manifolds and scalar
  curvature invariants},  {\em Class. Quantum Grav.} {\bf 27} (2010) 102001.

\bibitem{DadPon13}
N.~Dadhich and J.~M. Pons, {\it Probing pure {L}ovelock gravity by {N}ariai and
  {B}ertotti-{R}obinson solutions},  {\em J. Math. Phys.} {\bf 54} (2013)
  102501.

\bibitem{KasSen15}
D.~Kastor and \c{C}.~\c{S}ent{\"u}rk, {\it Symmetry breaking vacua in {L}ovelock
  gravity},  {\em Class. Quantum Grav.} {\bf 32} (2015), no.~18 185004.

\bibitem{ReaTanBen14}
H.~S. Reall, N.~Tanahashi, and B.~Way, {\it Causality and hyperbolicity of
  {L}ovelock theories},  {\em Class. Quantum Grav.} {\bf 31} (2014) 205005.

\bibitem{PraPra08}
A.~Pravdov\'a and V.~Pravda, {\it The {N}ewman-{P}enrose formalism in higher
  dimensions: vacuum spacetimes with a non-twisting geodetic multiple {W}eyl
  aligned null direction},  {\em Class. Quantum Grav.} {\bf 25} (2008) 235008.

\bibitem{BouDes85}
D.~G. Boulware and S.~Deser, {\it String generated gravity models},  {\em Phys.
  Rev. Lett.} {\bf 55} (1985) 2656--2660.

\bibitem{GibRub86}
G.~W. Gibbons and P.~J. Ruback, {\it Classical gravitons and their stability in
  higher dimensions},  {\em Phys. Lett. {\rm B}} {\bf 171} (1986) 390--395.

\bibitem{GleDot05}
R.~J. Gleiser and G.~Dotti, {\it Plane fronted gravitational waves in
  {L}ovelock-{Y}ang-{M}ills theory},  {\em Phys. Rev. {\rm D}} {\bf 71} (2005)
  124029.

\bibitem{PodOrt06}
J.~Podolsk\'y and M.~Ortaggio, {\it {R}obinson-{T}rautman spacetimes in higher
  dimensions},  {\em Class. Quantum Grav.} {\bf 23} (2006) 5785--5797.

\bibitem{OrtPraPra10}
M.~Ortaggio, V.~Pravda, and A.~Pravdov\'a, {\it Type {III} and {N} {E}instein
  spacetimes in higher dimensions: general properties},  {\em Phys. Rev. {\rm
  D}} {\bf 82} (2010) 064043.

\end{thebibliography}

\providecommand{\href}[2]{#2}\begingroup\raggedright\endgroup

\end{document}